\documentclass[12pt]{article}
\usepackage{amsmath}
\usepackage{amsfonts} 
\usepackage{graphicx,psfrag,epsf}
\usepackage{enumerate}
\usepackage{natbib}
\usepackage{url} 
\usepackage{hyperref}
\usepackage{graphicx}
\usepackage{amsthm}
\usepackage{bbm}
\usepackage{soul}
\usepackage{paralist}
\usepackage{mathtools}
\usepackage{adjustbox}
\usepackage{booktabs}
\usepackage{xcolor}
\usepackage{amssymb}
\usepackage{multirow}  
\usepackage{array}     
\usepackage{physics}
\usepackage{tablefootnote}
\usepackage{enumitem}
\usepackage{amssymb}

\usepackage{lipsum}

\allowdisplaybreaks

\newtheorem{theorem}{Theorem}

\newtheorem{assumption}{Assumption}

\theoremstyle{definition}

\theoremstyle{remark}
\newtheorem{remark}{Remark}

\newcommand{\PP}{\mathbb{P}}

\newcommand{\indep}{\perp\!\!\!\perp}
\newcommand{\calS}{\mathcal{S}}
\newcommand{\calX}{\mathcal{X}}

\newcommand{\E}{\mathbb{E}}

\newcommand{\Z}{\mathbf{Z}}

\newcommand{\corr}{\mathbb{C}\mathrm{or}}

\newcommand{\Var}{\mathbb{V}{\rm ar}}

\usepackage{setspace}
\usepackage{bibspacing}
\newcommand{\blind}{0}

\begin{document}

\def\spacingset#1{\renewcommand{\baselinestretch}%
{#1}\small\normalsize} \spacingset{1}


\if0\blind
{
  \title{\bf Causal sensitivity analyses for indirect and spillover mediation effects in cluster-randomized trials: Application to a pragmatic trial on chronic pain management}
  \author{Yuki Ohnishi$^{1}$, Michael J. Daniels$^{1}$ and Fan Li$^{2}$\\
    $^1$Department of Statistics, University of Florida, \\
    $^2$Department of Biostatistics,  
    Yale University School of Public Health
    }
\date{}  
  \maketitle
} \fi

\if1\blind
{
  \bigskip
  \bigskip
  \bigskip
  \begin{center}
    {\LARGE\bf }
\end{center}
  \medskip
} \fi

\begin{abstract}
Causal mediation in cluster-randomized trials can clarify whether a cluster-level intervention changes patient outcomes through patient- and peer-level processes, but identification relies on untestable cross-world assumptions. We develop a two-part sensitivity analysis for violations of two distinct cross-world independence assumptions. The first component adapts bridge score sensitivity analysis to assess residual mediator-outcome confounding with clustered mediator exposures; the second uses a Gaussian copula to assess cross-world dependence between an individual's treated mediator and summaries of control-arm peer mediators. The resulting framework parameterizes unidentifiable dependence structures using interpretable correlation parameters and combines them with a flexible Bayesian nonparametric model to characterize the plausible range of the target estimands. A nested common-atoms enriched Dirichlet process models the observed data and supports posterior-predictive standardization. We apply the framework to the Pain Program for Active Coping and Training trial, a primary care-embedded cognitive behavioral therapy intervention for adults with chronic pain receiving long-term opioid therapy. The intervention had a negative total effect on twelve-month pain-related disability. The natural indirect and direct effect decomposition was sensitive to plausible directional residual confounding, whereas the spillover and individual mediation effect decomposition changed little across the copula sensitivity analyses and attributed most of the negative indirect signal to the individual's mediator. Jointly relaxing both assumptions substantially increased uncertainty about the pathways. The analysis supports an overall improvement in disability and a plausible individual pain-impact pathway, while showing that detailed mechanistic claims remain assumption-dependent.
\end{abstract}

\noindent%
{\it Keywords: cluster-randomized trials; causal mediation analysis; sensitivity analysis; cross-world assumptions; bridge score; Gaussian copula; chronic pain}

\spacingset{1.9} 

\section{Introduction}

Chronic pain is one of the most common, disabling, and difficult conditions managed in primary care. Many patients have pain at multiple sites together with medical and mental health comorbidities, making long-term management especially challenging. Opioids were widely used for chronic pain despite limited evidence for sustained benefit and substantial patient and public-health harms. As a result, primary care systems need practical nonopioid options for patients already receiving long-term opioid therapy. The Pain Program for Active Coping and Training (PPACT) trial \citep{debar2022ppact} was a pragmatic cluster-randomized trial (CRT) of a primary care-embedded, interdisciplinary cognitive behavioral therapy (CBT) intervention for adults with chronic pain receiving long-term opioid therapy \citep{debar2022ppact}. The trial randomized 106 primary care provider (PCP) clusters to CBT or usual care before patient recruitment. The intervention included an intake evaluation and twelve weekly group sessions delivered by an interdisciplinary team in collaboration with the PCP.

The main PPACT report found that the CBT intervention produced modest but sustained reductions in pain impact and pain-related disability compared with usual care, while showing little evidence of reduced opioid dose. Having established effectiveness, the natural applied question is how the intervention worked. The most direct pathway is individual, e.g., CBT intervention may reduce later disability by helping patients manage pain impact earlier. Additionally, PPACT's provider (i.e. cluster) randomized design makes the pathway potentially broader. Patients shared clinicians and health-system contexts and group sessions. Therefore, changes in peer context, clinic practice, or the care environment could also contribute to outcomes, leading to potential \emph{spillover} effects on patients’ disability channeled through peers' improvement on intermediate outcomes. 

Causal mediation analysis provides a useful framework for this structural question. 
The causal mediation literature formalized natural direct and indirect effects, their identifying assumptions, and sensitivity analyses for mediator--outcome confounding in independent-data settings \citep{Imai2010,Tchetgen2012}. Extensions to multilevel settings and group-level exposures defined direct and indirect effects for clustered populations \citep{VanderWeele2009}. For CRTs, \citet{VanderWeele2013_CRT} showed that mediated effects can contain both within-cluster and spillover components when one unit's mediator affects another unit's outcome. More recent work has developed Bayesian nonparametric and semiparametric estimators for natural, individual, and spillover mediation effects in CRTs \citep{ohnishi2025,Cheng2026,ohnishi2026identification}. 
In the PPACT CRT, the six-month PEGS pain impact score is a clinically meaningful intermediate response, and the twelve-month Roland--Morris Disability Questionnaire (RMDQ) score captures later pain-related disability. Mediation analysis in this trial would then ask how much of the intervention effect on later disability is consistent with an earlier improvement in pain impact, how much remains outside that measured pathway, and whether the mediated part appears to operate mainly through the patient's own pain impact or through the pain impact experience of peers in the same provider patient panel. 
Untangling these individual and spillover pathways can guide future implementation by indicating what should be strengthened, monitored, or redesigned when primary-care systems scale similar CBT programs for improving chronic pain management.

However, these mediation analyses are more assumption-dependent than the main effectiveness analysis. Cluster randomization supports the comparison of CBT with usual care, but it does not randomize the six-month pain impact. To interpret pain impact as a mediator, the analysis must address the possibility that unmeasured factors jointly shape pain impact and later disability even after adjustment for baseline information. In PPACT, this concern is clinically salient because both the intermediate measure and the later outcome are patient-reported pain measures. Various unmeasured factors, such as pain catastrophizing, session engagement, concurrent nonpharmacologic therapies, social support, or short-term life stressors, could affect both the six-month pain impact score and the twelve-month disability score.

A second assumption is CRT-specific and arises when the mediated pathway is split into individual and spillover components. That split asks what would happen if a patient's own pain impact response followed the CBT arm while the peer context followed usual care. Such combinations cannot be observed directly in the trial. Existing mediation approaches for CRTs commonly treat the relevant patient and peer responses as independent after conditioning on observed information \citep{ohnishi2025,Cheng2026}. In PPACT, however, provider clusters may share unmeasured practice styles, pain-management norms, regional treatment cultures, or group dynamics that connect patients' responses within the same cluster. These shared features matter for an applied interpretation because they determine whether the apparent mechanism is mainly a patient's own improvement or a broader peer and clinic-level process.

Prior work has developed several sensitivity analysis frameworks for causal mediation analysis, but is strictly restricted to the independent data setting. \citet{VanderWeele2010_bias} developed bias formulas and effect decompositions, complemented by sensitivity analyses \citep{VanderWeele2014_effect_decomposition} and approaches based on bivariate normal latent confounders \citep{le_Cessie_2016,Lindmark2018}. \citet{Ding2016_sharp}, along with follow-up work \citep[e.g.,][]{SjolanderWaernbaum_2024}, established sharp bounds for risk ratios under interpretable parameters characterizing the strength of confounding. Bayesian approaches include \citet{mccandless2017,Daniels_Linero_Roy_2023}, which emphasize prior factorization into likelihood and sensitivity components. More recently, \citet{ohnishi2026sensitivityanalysiscausalmediation} introduced the bridge score, together with sharp bounds and a calibration framework, to assess violations of mediator ignorability.
Despite these advances, principled sensitivity analysis for the clustered settings allowing for spillover mediation effects remains underexplored. In particular, they do not address dependence structures unique to clustered data, such as violations of cross-world inter-individual mediator independence.

This paper studies the intervention mechanism of the PPACT trial by developing a sensitivity analysis framework for these unverifiable assumptions. Methodologically, the analysis separates two sources of uncertainty that are often blended together in mechanistic trial reports. First, we adapt bridge score sensitivity analysis \citep{ohnishi2026sensitivityanalysiscausalmediation} to a clustered mediator exposure that includes both a patient's own pain impact and the pain impact experience of peers in the same provider cluster. This layer calibrates residual mediator-outcome confounding on the outcome scale, allowing the indirect and direct pathways to be reevaluated under transparent assumptions about unmeasured confounders. Second, we adopt a Gaussian copula for the sensitivity analysis of within-cluster dependence, focusing on the key unidentified feature, i.e., the dependence between a patient’s own potential mediator response and the peer responses that would arise under the alternative treatment condition. Together, these two layers provide a modular workflow for CRT mediation analysis, with transparent sensitivity checks for potential violations of unverifiable assumptions. The observed trial data are used to learn the intervention, mediator, and outcome relationships, whereas the unidentified components of the pathway decomposition are varied through interpretable sensitivity parameters.

The empirical results support a cautious clinical message. Under the default analysis with no sensitivity correction, the PPACT intervention has a credibly beneficial effect on twelve-month disability, and the decomposition assigns beneficial contributions to both the pain impact pathway and the remaining direct pathway. Sensitivity analysis shows, however, that this indirect/direct split can change under plausible directional residual confounding between pain impact and disability. The spillover analysis is more stable in this pragmatic trial. Varying the unidentified dependence between a patient's treated pain impact and either the control peer mean or peer variability changes the spillover estimate only modestly, and the mediated signal is carried primarily by patients' own pain impact rather than by the peer-distribution pathway. Thus, PPACT provides evidence of improved disability and a plausible individual pain impact pathway, but the more detailed mediation claims should be interpreted through the sensitivity analyses rather than as assumption-free consequences of the trial.

\subsection{Pain Program for Active Coping and Training (PPACT) trial}
\label{sec:motivating_example}

The analytic PPACT dataset contains 850 patients from 106 PCP clusters across three Kaiser Permanente regions \citep{debar2022ppact}. We use the six-month PEGS pain impact score as the mediator and the twelve-month RMDQ score as the outcome, matching the clinical ordering in which CBT may first change pain impact and later change pain-related disability.
Table~\ref{tab:ppact_table1} summarizes the selected analytic variables. The mediator differs more strongly by arm than the outcome: mean 6-month PEGS is 5.50 under CBT versus 6.20 under usual care, whereas mean 12-month RMDQ is 0.61 versus 0.68. This pattern makes mediation analysis substantively useful, but also fragile. That is, both PEGS and RMDQ are pain measures reported by patients and may be jointly affected by partially observed factors such as pain coping, catastrophizing, engagement, social support, or concurrent non-pharmacologic therapies. Follow-up responses used in the analysis were completed during preprocessing, but missing baseline covariates were retained in the modeling input and updated from their posterior predictive distribution under the proposed Bayesian nonparametric model rather than filled with a fixed mean or median.

\begin{table}[!htbp]
\centering
\small
\caption{Selected summary of the analytic PPACT dataset by treatment arm. Entries are counts (percentages) unless otherwise noted; descriptive entries use the completed Table~1 dataset, while the CA-EDP analysis retains missing baseline covariates for posterior updating. The mediator is the 6-month PEGS pain impact score and the outcome is the 12-month Roland--Morris Disability Questionnaire score.}
\label{tab:ppact_table1}
\begin{adjustbox}{width=\textwidth}
\begin{tabular}{lrr}
\toprule
Characteristic & Usual care (417 patients) & CBT intervention (433 patients) \\
\midrule
\multicolumn{3}{l}{\textbf{Study design}} \\
No. of participants & 417 & 433 \\
No. of PCP clusters & 53 & 53 \\
Cluster size: median [min, max] & 8 [3, 13] & 9 [3, 11] \\
\addlinespace
\multicolumn{3}{l}{\textbf{Selected baseline characteristics}} \\
Age (years), mean (SD) & 59.12 (12.23) & 61.35 (11.86) \\
Women & 286 (68.6\%) & 287 (66.3\%) \\
BMI (kg/m\textsuperscript{2}), mean (SD) & 32.59 (8.64) & 32.83 (9.12) \\
2 or more chronic comorbid medical conditions & 156 (37.4\%) & 148 (34.2\%) \\
Anxiety and/or depression & 189 (45.3\%) & 181 (41.8\%) \\
No. of chronic pain conditions, median [IQR] & 4 [3, 5] & 4 [3, 5] \\
Average daily opioid dose (MME), median [IQR] & 30.83 [15.98, 75.06] & 28.50 [16.00, 53.17] \\
Pain impact PEGS score (0--10), mean (SD) & 6.65 (1.77) & 6.50 (1.69) \\
Pain-related disability RMDQ (0--1), mean (SD) & 0.69 (0.19) & 0.67 (0.19) \\
\addlinespace
\multicolumn{3}{l}{\textbf{Mediator and outcome}} \\
PEGS pain score at 6 months, mean (SD) & 6.20 (2.04) & 5.50 (2.11) \\
RMDQ disability score at 12 months, mean (SD) & 0.68 (0.20) & 0.61 (0.24) \\
\bottomrule
\end{tabular}
\end{adjustbox}
\end{table}

\section{Notation and setup}
\label{sec:bridge_score_sa}

\subsection{Observed data and potential outcomes}

Consider a clustered-randomized trial (CRT) with clusters indexed by $i=1,\dots,I$, and let unit $j \in \{1,\dots,N_i\}$ belong to cluster $i$ with a cluster size $N_i$. Let $A_i\in\{0,1\}$ denote the cluster-level treatment assignment. We collect the cluster-level baseline covariates in $V_i$ and the individual-level baseline covariates in $C_{ij}$. Define
$X_{ij}=(V_i,C_{ij})\in\cal X$ as the combined pretreatment covariate vector relevant to unit $j$, $\underline C_i=(C_{i1},\ldots,C_{iN_i})$ as the vector of individual-level covariates in cluster $i$, and $X_i=(V_i,\underline C_i)$ as the full cluster covariate collection. Thus $X_{ij}$ indexes unit-level conditional laws, whereas $X_i$ is used only when the joint law of all units in a cluster is required. When cluster and unit subscripts are suppressed, $X$ denotes a generic draw of $X_{ij}$. Let $\underline M_i=(M_{i1},\dots,M_{iN_i})$ denote the vector of post-treatment mediators in cluster $i$.

For notational convenience, define the mediator exposure $S_{ij}=\bigl(M_{ij},G(\underline M_{i(-j)})\bigr)$,
where $M_{ij}$ is the unit's own mediator, $\underline M_{i(-j)}$ is the vector of mediators for the unit's peers, and $G$ is a pre-specified, possibly vector-valued peer-summary map. In the PPACT application below,
\[
\begin{aligned}
G(\underline M_{i(-j)})
&=\left(\overline M_{i,-j},D_{i,-j}\right),\qquad
\overline M_{i,-j}=\frac{1}{N_i-1}\sum_{j'\ne j}M_{ij'},\\
D_{i,-j}
&=\left\{\frac{1}{N_i-2}\sum_{j'\ne j}(M_{ij'}-\overline M_{i,-j})^2\right\}^{1/2}.
\end{aligned}
\]
Thus the exposure records both the average pain impact among a patient's peers and its heterogeneity. The sample SD is well defined because every analyzed PPACT cluster contains at least three patients.
Let $\nu$ be a $\sigma$-finite dominating measure on the support $\calS$ of $S_{ij}$ such that, for each $a\in\{0,1\}$ and $x\in\calX$, the conditional law of $S_{ij}$ given $(A_i=a,X_{ij}=x)$ admits a Radon--Nikod\'ym derivative $f_a(s\mid x)=\frac{dF_{S\mid A=a,X=x}}{d\nu}(s),  s\in\calS$.
This notation unifies the discrete and continuous cases: (i) if $S$ is discrete, then $\nu$ is counting measure and $f_a(s\mid x)=\PP(S=s\mid A=a,X=x)$ and (ii) if $S$ is continuous, then $\nu$ is Lebesgue measure and $f_a(s\mid x)$ is the corresponding conditional density.

For $a\in\{0,1\}$ and $s\in\calS$, let $Y_{ij}(a,s)$ denote the potential outcome for unit $j$ in cluster $i$ under treatment $a$ and mediator exposure $s$.
For $a, a' \in \{0,1\}$, let $S_{ij}(a,a')=\bigl(M_{ij}(a),G(\underline M_{i(-j)}(a'))\bigr)$ denote the cross-world mediator exposure that combines unit $j$'s own potential mediator under treatment $a$ with the peer-summary mediator exposure under treatment $a'$. The natural exposure under uniform treatment $a$ is the diagonal $S_{ij}(a):=S_{ij}(a,a)$, which is the realized exposure when the entire cluster receives treatment $a$; the off-diagonal cases $S_{ij}(1,0)$ and $S_{ij}(0,1)$ are cross-world quantities that figure into the spillover and individual mediation effects defined below. The unit-level observed data are $O_{ij}=(X_{ij},A_i,N_i,S_{ij},Y_{ij})$; equivalently, the cluster record is $O_i=\{V_i,A_i,N_i,(C_{ij},S_{ij},Y_{ij})_{j=1}^{N_i}\}$.
All expectations below are taken with respect to the superpopulation law under a two-stage sampling scheme formalized in the following assumption.
\begin{assumption}[Super-population framework]
\label{asmp:superpopulation}
    (a) The cluster size $N_i$ follows an unknown distribution $ \mathcal{P}^{N}$ over a finite  support on $ \mathbb{N}^{+} $. (b) Conditional on $ N_i $, the joint  distribution $ \mathcal{P}^{V,A,C,S,Y \mid N} $ can be decomposed into $ \mathcal{P}^{V \mid N} \times \mathcal{P}^{C \mid V,N} \times \mathcal{P}^{S \mid C,V,N} \times \mathcal{P}^{Y \mid S,C,V,N} \times \mathcal{P}^{ A }$. Furthermore, positivity holds such that the conditional density $f_{S \mid C, V, A, N} (s \mid c, v, a, n) >0$ for any $\{ s, c, v, a, n \}$ over their valid support.
\end{assumption}
\noindent
Under Assumption \ref{asmp:superpopulation}, a cluster index is drawn uniformly over the $I$ randomized clusters and, given the cluster, a unit index is drawn uniformly over the $N_i$ units in that cluster. Each cluster therefore contributes equally to every induced functional regardless of its size, and we adopt it as the target scale throughout. We suppress $(i,j)$ whenever no confusion can arise.

\subsection{Estimands and assumptions}

Our primary cross-world estimand is the cross-world treatment mean under treatment $1$ with the mediator exposure set to its natural value under treatment $0$, $\theta = \E\{Y(1,S(0,0))\}.$
Let $\delta_a = \E\{Y(a,S(a,a))\}$
denote the arm-specific observed means for $a\in\{0,1\}$. The natural direct and indirect effects on the additive scale are $\mathrm{NDE}=\theta-\delta_0$, $\mathrm{NIE}=\delta_1-\theta,$
so that $\mathrm{TE}=\delta_1-\delta_0=\mathrm{NDE}+\mathrm{NIE}$.
In the PPACT application, $\delta_1$ and $\delta_0$ are the mean 12-month RMDQ scores under the CBT and usual care arms, allowing each arm to generate its own 6-month pain impact profile. The cross-world mean $\theta$ is an \emph{a priori counterfactual} mean. It evaluates the 12-month disability outcome under CBT while holding the patient's own and peer pain impact exposure to the usual care regime. Thus, $\mathrm{NIE}$ represents the part of the contrast of CBT and usual care attributed to shifting the 6-month pain impact exposure from usual care to CBT, whereas $\mathrm{NDE}$ represents the remaining contrast not explained by that measured pain impact shift. Since lower RMDQ scores indicate less disability, negative effects correspond to improvement.

To further disentangle the indirect pathway, we follow \citet{Cheng2026,ohnishi2025} and decompose the natural indirect effect into a \emph{spillover mediation effect} (SME) and an \emph{individual mediation effect} (IME). Define the cross-world mean $\omega = \E\{Y(1,S(1,0))\}$.
The spillover and individual mediation effects on the additive scale are $\mathrm{SME}=\delta_1-\omega = \E\{Y(1,S(1,1))\}-\E\{Y(1,S(1,0))\}$ and $\mathrm{IME}=\omega-\theta = \E\{Y(1,S(1,0))\}-\E\{Y(1,S(0,0))\}$,
so that $\mathrm{NIE}=\mathrm{SME}+\mathrm{IME}$. The IME captures the indirect effect operating through unit $j$'s own mediator $M_{ij}$ while holding the peer mediators fixed at its control value $\underline M_{i(-j)}(0)$, and the SME captures the indirect effect operating through the peer mediators $\underline M_{i(-j)}$ while holding unit $j$'s own mediator fixed at its treated value $M_{ij}(1)$. Identification of $\omega$ requires more than the assumptions used for $\theta$, because $S(1,0)$ mixes potential mediators from two distinct counterfactual treatment regimes within the same cluster.
In PPACT, the IME is the pathway through a patient's own 6-month PEGS response, and the SME is the pathway through the peer PEGS distribution---its mean and variability---in the same PCP cluster. The spillover term should therefore be read as a peer-environment contrast, not necessarily as evidence of direct interaction among all patients in a cluster.

We invoke the following identification assumptions. Consistency (Assumption~\ref{ass:bridge_consistency}) and treatment randomization (Assumption~\ref{ass:treatment_randomization}) are standard in CRTs and are directly supported by the design. Positivity (Assumption~\ref{ass:bridge_positivity}) is a substantive requirement on the observed data law that is not automatically delivered by cluster-level randomization, since randomization does not control the mediator-exposure mechanism $f_a(s\mid x)$; we therefore state it as a separate condition that must be checked in the relevant empirical support.

\begin{assumption}[Consistency]
\label{ass:bridge_consistency}
For all $a\in\{0,1\}$ and $s\in\calS$, $A=a \implies S(a,a)=S$, $A=a,\ S=s \implies Y(a,s)=Y$.
\end{assumption}

\begin{assumption}[Positivity]
\label{ass:bridge_positivity}
For the trial-level treatment assignment, $0<\pi:=\PP(A=1)<1$. For each $a\in\{0,1\}$ and each $(x,s)$ in the support of $(X,S)$ relevant to the causal estimand, $f_a(s\mid x) > 0$.
\end{assumption}

\begin{assumption}[Treatment randomization]
\label{ass:treatment_randomization}
For all $a, a', a''\in\{0,1\}$ and $s\in\calS$, \\
$\bigl\{Y(a,s),\,S(a',a''),\,X\bigr\}\indep A$.
\end{assumption}

Consistency relates the observed mediator exposure and outcome to their potential-outcome counterparts at the realized treatment: first, $S_{ij}=S_{ij}(A_i,A_i)$, so the observed mediator exposure equals the natural (diagonal) potential mediator exposure at the assigned arm; and second, $Y_{ij}=Y_{ij}\bigl(A_i,\,S_{ij}\bigr)$, so the observed outcome equals the potential outcome at the assigned arm and the realized mediator exposure. In PPACT, this means that the recorded CBT or usual care assignment, together with the observed 6-month own and peer PEGS exposure, defines the relevant intervention-mediator history for the 12-month RMDQ outcome.
Positivity requires sufficient overlap in the own and peer pain impact exposure profiles across treatment arms after conditioning on baseline covariates; otherwise, the mediation contrasts would rely on extrapolating to PEGS profiles not represented in the relevant arm. Assumption~\ref{ass:treatment_randomization} holds for PPACT because provider clusters were randomized before patient recruitment. Standard mediation analyses further rely on \emph{sequential ignorability}, which assumes that, conditional on the observed confounders and treatment, the mediator is independent of the potential outcomes. However, because this assumption imposes a cross-world structure and the mediator is not itself randomized, it is not directly empirically verifiable. The sensitivity analysis developed below therefore focuses entirely on violations of sequential ignorability.

\begin{remark}[Two layers of unidentifiability for SME and IME]
\label{rem:two_layers}
The bridge score sensitivity analysis in Section~\ref{sec:bridge_score_sensitivity_crt} varies $\theta$, $\mathrm{NIE}$, and $\mathrm{NDE}$ away from the sequential ignorability anchor through the aggregated bridge score correction $\bar{\Delta}$. The finer decomposition $\mathrm{NIE}=\mathrm{SME}+\mathrm{IME}$ requires identification of the additional cross-world mean $\omega=\E\{Y(1,S(1,0))\}$, which involves a within-cluster joint of $(M_{ij}(1),\underline M_{i(-j)}(0))$ that is never observed and is therefore not identified from the data. This is a separate source of unidentifiability: the bridge score layer addresses unmeasured mediator-outcome confounding within an arm, whereas $\omega$ additionally requires modeling cross-world dependence between potential mediators across arms. Section~\ref{sec:cross_world_omega} introduces a Gaussian copula partial identification of $\omega$ that treats this cross-world dependence as a sensitivity parameter, paralleling the copula modeling of the post-treatment confounder in \citet{ohnishi2026identification}.
\end{remark}

\section{Sensitivity analyses for cross-world assumptions in CRTs}
\label{sec:sensitivity_crt}

This section collects the two sensitivity analyses used in the PPACT application. The first addresses violations of mediator sequential ignorability for the cross-world treatment mean $\theta=\E\{Y(1,S(0,0))\}$. This layer is the CRT adaptation of the bridge score sensitivity analysis of \citet{ohnishi2026sensitivityanalysiscausalmediation}. The individual-level mediator $M$ in that paper is replaced by the CRT mediator exposure $S_{ij}=(M_{ij},G(\underline M_{i(-j)}))$. The second addresses the between-individual cross-world mediator dependence required for the mixed exposure $S_{ij}(1,0)=(M_{ij}(1),G\{\underline M_{i(-j)}(0)\})$ and hence for the spillover/individual split of the natural indirect effect. These two layers are distinct, since the bridge score layer corrects residual mediator-outcome confounding within an arm, whereas the copula layer specifies how potential mediators under different treatment arms are coupled across individuals in the same cluster.

\subsection{Bridge score sensitivity analysis for mediator ignorability}
\label{sec:bridge_score_sensitivity_crt}

Let $f_a(s\mid x)$ denote the treatment-specific conditional law of the CRT mediator exposure $S$ given covariates $X=x$ in arm $a$. Following \citet{ohnishi2026sensitivityanalysiscausalmediation}, define the covariate-indexed bridge score by
\begin{equation}
B_x(s)=\bigl(B_{0x}(s),B_{1x}(s)\bigr)=\bigl(f_0(s\mid x),f_1(s\mid x)\bigr).
\label{def:bridge_score}
\end{equation}
The bridge score retains the two mediator-exposure laws needed for the cross-world mean $\theta$. The balancing and projection properties of this score, as well as the sharp additive sensitivity bound used below, are proved in \citet{ohnishi2026sensitivityanalysiscausalmediation}; here we use their direct CRT specialization with $s=(m,g_\mu,g_\sigma)$, where $(g_\mu,g_\sigma)=G(\underline m_{-j})$ are the peer mediator mean and SD.
For $a\in\{0,1\}$, define the bridge score sensitivity function
\begin{equation*}
\Delta_a(s,b)
=
\E\{Y(1,s)\mid A=a,S=s,B_X(s)=b\}
-
\E\{Y(1,s)\mid A=a,B_X(s)=b\}.
\end{equation*}
It measures the residual mediator-outcome confounding induced by conditioning on the observed mediator exposure $S=s$ within arm $a$, after the mediator-exposure laws have been compressed into $B_X(s)$. Exact sequential ignorability corresponds to $\Delta_0\equiv\Delta_1\equiv0$. Let $\mu_1(s,b)=\E(Y\mid A=1,S=s,B_X(s)=b)$ be the treated-arm observed outcome regression. Under the consistency, positivity, and treatment-randomization assumptions in Section~\ref{sec:bridge_score_sa}, \citet{ohnishi2026sensitivityanalysiscausalmediation} showed that the bridge score representation gives
\begin{align}
\theta
=
\E\!\left[
\int_{\calS}
\Bigl\{
\mu_1\bigl(s,B_X(s)\bigr)-\Delta_1\bigl(s,B_X(s)\bigr)+\Delta_0\bigl(s,B_X(s)\bigr)
\Bigr\}
f_0(s\mid X)\,\nu(ds)
\right].
\label{eq:bridge_theta_identified}
\end{align}

The sharp additive variance bound of \citet{ohnishi2026sensitivityanalysiscausalmediation} is indexed by two interpretable latent-confounding quantities. In a bridge score stratum, let $U$ denote a latent mediator-outcome confounder and let $\psi_a(u;s,b)=\E\{Y(1,s)\mid A=a,B_X(s)=b,U=u\}$. Define the local latent mean-variance component
\begin{equation}
v_a(s,b)
=
\Var\{\psi_a(U;s,b)\mid A=a,B_X(s)=b\}
\label{eq:v_local}
\end{equation}
and the chi-square residual-selection component
\begin{equation}
\chi_a(s,b)
=
\E\!\left[
\left\{
\frac{
dF_{U\mid A=a,S=s,B_X(s)=b}
}{
dF_{U\mid A=a,B_X(s)=b}
}(U)-1
\right\}^2
\mid A=a,B_X(s)=b
\right].
\label{eq:chi_local}
\end{equation}
Then
\begin{equation}
\bigl|\Delta_a(s,b)\bigr|
\le
\Xi_{a,\chi}(s,b)
:=
\sqrt{v_a(s,b)\chi_a(s,b)}.
\label{eq:sharp_additive_bound}
\end{equation}
For calibration it is often easier to place residual selection on a likelihood-ratio scale. Let
\begin{equation}
\gamma_a(s,b)
=
\operatorname*{sup}_{u}
\frac{
dF_{U\mid A=a,S=s,B_X(s)=b}
}{
dF_{U\mid A=a,B_X(s)=b}
}(u)
\ge 1
\label{eq:gamma_local}
\end{equation}
so that $\chi_a(s,b)\le \gamma_a(s,b)-1$. Here $v_a$ measures outcome relevance of the latent confounder through variation in the $U$-specific treated potential-outcome mean, while $\gamma_a$ measures residual selection on the latent confounder induced by conditioning on $S=s$ after conditioning on the bridge score. The likelihood-ratio version of the variance envelope is therefore
\begin{equation}
\bigl|\Delta_a(s,b)\bigr|
\le
\Xi_{a,\gamma}(s,b)
:=
\sqrt{v_a(s,b)\{\gamma_a(s,b)-1\}}.
\label{eq:sharp_additive_lr_bound}
\end{equation}

For the PPACT application, we use the residual-budget calibration from \citet{ohnishi2026sensitivityanalysiscausalmediation}. The goal is not to estimate the unidentified sensitivity functions, but to put their aggregate contribution on a transparent applied scale. Define the treated-arm residual outcome scale
\begin{equation}
\sigma_{\mathrm{res}}^2(s,b)
=
\Var\{Y\mid A=1,S=s,B_X(s)=b\}.
\label{eq:sigma_eta_residual}
\end{equation}
The auxiliary residual model conditions on the mediator exposure and the bridge features, so $\sigma_{\mathrm{res}}(s,b)$ represents the remaining local treated-arm outcome variation available to latent mediator-outcome confounding. This calibration is tied to observed data: at every posterior draw we estimate the treated-arm mean and residual-variance regressions from the observed PPACT outcomes and evaluate the fitted residual scale at the same counterfactual mediator draws used for g-computation. We then impose the residual variance-budget restriction
\begin{equation}
v_a(s,b)\le k_a\sigma_{\mathrm{res}}^2(s,b),
\qquad a\in\{0,1\},
\label{eq:eta_calibration}
\end{equation}
where $k_a\in[0,1]$ controls the share of this local residual variance allowed for latent conditional mean heterogeneity. We also impose the residual-selection restriction $\gamma_a(s,b)\le g_a$ for $a\in\{0,1\}$,
where $g_a\ge1$ is varied over a substantive grid. We examine both an envelope grid and a tipping-point analysis that moves the prior mode across the admissible corrected-bias interval. Combining these restrictions with \eqref{eq:sharp_additive_lr_bound} yields the scalar per-arm envelope
\begin{equation*}
\Xi_{a,\mathrm{RB}}(s,b;k_a,g_a)
=
\widehat\sigma_{\mathrm{res}}(s,b)\sqrt{k_a(g_a-1)},
\qquad a\in\{0,1\}.
\end{equation*}
Let $[L_Y,U_Y]$ denote the known outcome support; for the scaled RMDQ outcome, $[L_Y,U_Y]=[0,1]$. When this support information is used, the residual-budget envelope is intersected pointwise with the support-implied range cap $(U_Y-L_Y)(g_a-1)/g_a$.

Finally, \eqref{eq:bridge_theta_identified} is written in terms of the sequential ignorability anchor and aggregated sensitivity functionals:
\begin{equation}
\theta_{\mathrm{SI}}
:=
\E\!\left[
\int_{\calS}
\mu_1\bigl(s,B_X(s)\bigr) f_0(s\mid X)\,\nu(ds)
\right], \quad
\bar{\Delta}_a
:=
\E\!\left[
\int_{\calS}
\Delta_a\bigl(s,B_X(s)\bigr)f_0(s\mid X)\,\nu(ds)
\right]
\label{eq:theta_SI_def}
\end{equation}
Thus the sensitivity analysis for the natural indirect/direct decomposition can be summarized through the scalar corrected-bias term
\begin{equation}
\theta=\theta_{\mathrm{SI}}+\bar{\Delta}_0-\bar{\Delta}_1
=\theta_{\mathrm{SI}}+\bar{\Delta},
\label{eq:theta_scalar}
\end{equation}
where $\bar{\Delta}:=\bar{\Delta}_0-\bar{\Delta}_1$.
This scalar reduction is useful for the applied analysis. The observed data model learns $\theta_{\mathrm{SI}}$, $\delta_0$, and $\delta_1$, while the sensitivity module specifies how much posterior uncertainty and directional bias should be assigned to $\bar{\Delta}$.
The corrected-bias term in \eqref{eq:theta_scalar} is then bounded by
$|\bar{\Delta}| \le \bar{\Xi}_0+\bar{\Xi}_1 =: \bar{\Xi}_\Delta$, where $\bar{\Xi}_a$ is obtained by averaging the pointwise envelope $\Xi_{a,\mathrm{RB}}\{s,B_X(s);k_a,g_a\}$ over the same $f_0$ integration measure that defines $\bar{\Delta}_a$.

\paragraph{Triangular prior on $\bar{\Delta}$.}
\label{sec:triangular_prior_delta}
For each fixed sensitivity specification $(k_a,g_a)$, we encode directionality with a triangular prior supported on the calibrated interval $[-\bar{\Xi}_\Delta,\bar{\Xi}_\Delta]$. Let
\[
Z=\frac{\bar{\Delta}+\bar{\Xi}_\Delta}{2\bar{\Xi}_\Delta}\in[0,1],
\qquad
Z\sim\mathrm{Triangular}(0,1,q),
\qquad q\in[0,1],
\]
where $q$ is the normalized location of the mode. Equivalently,
$\bar{\Delta}=-\bar{\Xi}_\Delta+2\bar{\Xi}_\Delta Z$ has mode
$\bar{\Xi}_\Delta(2q-1)$, with
\begin{equation}
\E(\bar{\Delta}\mid\bar{\Xi}_\Delta,q)
=\frac{\bar{\Xi}_\Delta(2q-1)}{3},
\qquad
\mathrm{Var}(\bar{\Delta}\mid\bar{\Xi}_\Delta,q)
=\frac{2\bar{\Xi}_\Delta^2(1+q^2-q)}{9}.
\label{eq:delta_mean_var}
\end{equation}
The single parameter $q$ is directly interpretable: $q=1/2$ is the symmetric neutral default, $q<1/2$ favors negative corrected bias, and $q>1/2$ favors positive corrected bias. The boundary choices $q=0$ and $q=1$ give decreasing and increasing right-triangular defaults, respectively. Because positive $\bar{\Delta}$ increases $\theta$, it decreases $\mathrm{NIE}=\delta_1-\theta$ and increases $\mathrm{NDE}=\theta-\delta_0$; negative $\bar{\Delta}$ does the reverse. Table~\ref{tab:prior_interpretation} summarizes this mapping.

\begin{table}[t]
\centering
\caption{Interpretation of triangular-prior modes for the corrected-bias term
$\bar{\Delta}=\bar{\Delta}_0-\bar{\Delta}_1$ entering the mediation decomposition in \eqref{eq:theta_scalar}.}
\label{tab:prior_interpretation}
\begin{adjustbox}{width=\textwidth}
\begin{tabular}{p{0.22\textwidth} p{0.22\textwidth} p{0.56\textwidth}}
\hline
Mode specification & Prior direction on $\bar{\Delta}$ & Interpretation and implication \\
\hline
$q=1/2$
&
$\bar{\Delta}$ is centered at zero
&
Neutral triangular default. The prior is symmetric and does not favor positive or negative aggregated residual mediator-outcome confounding. \\

$q>1/2$
&
Prior mass favors $\bar{\Delta}>0$
&
Positive corrected bias is favored. Moving $q$ toward one shifts mass upward, tending to make $\mathrm{NIE}$ more negative and $\mathrm{NDE}$ less negative. \\

$q<1/2$
&
Prior mass favors $\bar{\Delta}<0$
&
Negative corrected bias is favored. Moving $q$ toward zero shifts mass downward, tending to make $\mathrm{NIE}$ less negative and $\mathrm{NDE}$ more negative. \\
\hline
\end{tabular}
\end{adjustbox}
\end{table}

\subsection{Sensitivity analysis for between-individual mediator dependence}
\label{sec:cross_world_omega}

The natural indirect effect can be decomposed as $\mathrm{NIE}=\mathrm{SME}+\mathrm{IME}$ through the mixed exposure mean $\omega=\E\{Y(1,S(1,0))\}$.
Unlike $\theta$, this quantity requires the joint law of an individual's treated mediator and the peers' control mediators. The observed data identify the arm-specific mediator marginals, but they do not identify the cross-world coupling of $(M_{ij}(1),\underline M_{i(-j)}(0))$. A common point identification assumption is between-individual cross-world mediator independence, $M_{ij}(1)\indep\underline M_{i(-j)}(0)\mid X_i,N_i$ \citep{ohnishi2025,Cheng2026}. We instead treat this dependence as a sensitivity component.

For this layer, we use the standard mediator ignorability anchor for the outcome regression, adapted to clustered mediator exposures. Recall that $X_i=(V_i,\underline C_i)$ collects the cluster-level covariates and all individual-level baseline covariates in cluster $i$.

\begin{assumption}[Sequential ignorability of the mediator]
\label{ass:si_mediators}
For all $a,a'\in\{0,1\}$, all $i,j$, and all $s$ in the relevant support,
\[
Y_{ij}(a,s)\indep \underline M_i(a')\mid A_i,X_i,N_i.
\]
\end{assumption}

Assumption~\ref{ass:si_mediators} is used here only for the mixed exposure outcome regression in $\omega$. The preceding bridge score layer shows how the natural indirect/direct decomposition can be varied away from this anchor for $\theta$; the copula layer below asks how much additional sensitivity arises from the cross-world mediator coupling itself.

\subsubsection{Gaussian copula model for cross-world dependence}
\label{sec:cross_world_copula}

The mixed exposure mean $\omega$ depends on the cross-world pair
$\{M_{ij}(1),G(\underline M_{i(-j)}(0))\}$. Write $G=(G_\mu,G_\sigma)$ for the peer mean and peer SD. Thus the Gaussian construction is confined to the reduced own-mediator/peer-summary representation, while its low-dimensional same-world correlation block is learned from the observed control arm. Let
\begin{align*}
H_{1j}(m\mid X_i,N_i)
&=\PP\{M_{ij}(1)\le m\mid X_i,N_i\},\\
Q_{0r,j}(g\mid X_i,N_i)
&=\PP[G_r\{\underline M_{i(-j)}(0)\}\le g\mid X_i,N_i],\qquad r\in\{\mu,\sigma\},
\end{align*}
denote identified marginal CDFs, learned from the CA-EDP posterior predictive law. Define the latent probability-integral-transform scores
\[
\begin{aligned}
Z_{1j}&=\Phi_1^{-1}\{H_{1j}(M_{ij}(1))\},\\
Z_{rj}&=\Phi_1^{-1}\bigl[Q_{0r,j}\{G_r(\underline M_{i(-j)}(0))\}\bigr],
\qquad r\in\{\mu,\sigma\}.
\end{aligned}
\]
For each $(i,j)$, we use the Gaussian copula
\begin{equation}
\label{eq:gaussian_copula_M}
\begin{pmatrix}Z_{1j}\\Z_{\mu j}\\Z_{\sigma j}\end{pmatrix}
\biggm|X_i,N_i
\sim
\mathrm{N}_3\!\left\{\mathbf 0,
\boldsymbol\Omega_{ij}=
\begin{pmatrix}
1&\rho_{1\mu}&\rho_{1\sigma}\\
\rho_{1\mu}&1&\kappa_{\mu\sigma}\\
\rho_{1\sigma}&\kappa_{\mu\sigma}&1
\end{pmatrix}\right\}.
\end{equation}
The same-world peer-summary correlation matrix
$R_G=\corr\{(Z_{\mu j},Z_{\sigma j})^\top\mid X_i,N_i\}$, with off-diagonal entry $\kappa_{\mu\sigma}$, is learned from the observed control arm. The unidentified vector
$\boldsymbol\rho_{1G}=(\rho_{1\mu},\rho_{1\sigma})^\top$ directly links the individual's treated mediator to the control peer mean and SD. These are exactly the three latent variables required for $\omega$. The conditional law needed for $\omega$ is
\begin{equation}
\begin{pmatrix}Z_{\mu j}\\Z_{\sigma j}\end{pmatrix}
\biggm| Z_{1j}=z,X_i,N_i
\sim \mathrm{N}_2\{\boldsymbol\rho_{1G}z,
R_G-\boldsymbol\rho_{1G}\boldsymbol\rho_{1G}^\top\}.
\label{eq:summary_copula_conditional}
\end{equation}
Given positive-definite $R_G$, the trivariate correlation matrix is positive definite if and only if
\begin{equation}
\boldsymbol\rho_{1G}^\top R_G^{-1}\boldsymbol\rho_{1G}<1.
\label{eq:summary_copula_pd}
\end{equation}
See Supplementary Material Section \ref{sec:sensitivity_analysis_details} for the derivation.
Thus marginal inference for $\omega$ is driven entirely by the two directly interpretable own-to-summary correlations, with a bivariate prior restricted to the ellipse in \eqref{eq:summary_copula_pd}.

\subsubsection{Partial identification of \texorpdfstring{$\omega$}{omega}}
\label{sec:omega_partial_id}

\begin{assumption}[Gaussian copula cross-world dependence]
\label{ass:gaussian_copula}
For each $(i,j)$, the joint distribution of
$\{M_{ij}(1),G_\mu(\underline M_{i(-j)}(0)),G_\sigma(\underline M_{i(-j)}(0))\}$ admits the Gaussian copula representation \eqref{eq:gaussian_copula_M}, with marginal CDFs and $R_G$ identified from the observed data and cross-world dependence indexed by $\boldsymbol\rho_{1G}$.
\end{assumption}

Under Assumption~\ref{ass:gaussian_copula}, $\omega$ is identified up to the value of $\boldsymbol\rho_{1G}$.

\begin{theorem}[Partial identification of $\omega$ under Gaussian copula cross-world dependence]
\label{thm:omega_partial_id}
Suppose Assumptions~\ref{ass:bridge_consistency}--\ref{ass:treatment_randomization}, \ref{ass:si_mediators}, and \ref{ass:gaussian_copula} hold. Then, for any specified $\boldsymbol\rho_{1G}$ satisfying \eqref{eq:summary_copula_pd},
\begin{equation}
\label{eq:omega_partial_id}
\begin{split}
\omega
\;=\;
\E\Biggl[
\frac{1}{N_i}\sum_{j=1}^{N_i}
&\int\!\!\int
\E\bigl\{Y_{ij}\bigm|A_i=1,M_{ij}=m_j,G(\underline M_{i(-j)})=\boldsymbol g,X_i,N_i\bigr\}\\
&\times
f_{M_{ij}\mid A_i=1,X_i,N_i}(m_j)\,
 q_{G_{ij}(0)\mid M_{ij}(1)=m_j,X_i,N_i;\,\boldsymbol\rho_{1G}}(\boldsymbol g)
\,d\boldsymbol g\,dm_j
\Biggr],
\end{split}
\end{equation}
where $\boldsymbol g=(g_\mu,g_\sigma)$ and $q_{G_{ij}(0)\mid M_{ij}(1),X_i,N_i;\boldsymbol\rho_{1G}}$ is determined by \eqref{eq:summary_copula_conditional}, the identified marginal CDFs, and their observed same-world copula. Consequently, given $\boldsymbol\rho_{1G}$, the spillover and individual mediation effects $\mathrm{SME}=\delta_1-\omega$ and $\mathrm{IME}=\omega-\theta$ are identified.
\end{theorem}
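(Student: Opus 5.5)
The plan is to follow the standard g-formula argument for cross-world means, adapted to the cluster-then-unit sampling of Assumption~\ref{asmp:superpopulation}. First write $\omega=\E\{Y(1,S(1,0))\}$ on the cluster-uniform scale as $\E[N_i^{-1}\sum_j Y_{ij}(1,S_{ij}(1,0))]$. Then condition on $(X_i,N_i)$ and on the cross-world exposure $S_{ij}(1,0)=(M_{ij}(1),G\{\underline M_{i(-j)}(0)\})$. By iterated expectations,
\[
\E\{Y_{ij}(1,S_{ij}(1,0))\mid X_i,N_i\}=\int\!\!\int \E\{Y_{ij}(1,(m,\boldsymbol g))\mid M_{ij}(1)=m,G_{ij}(0)=\boldsymbol g,X_i,N_i\}\,dF_{M_{ij}(1),G_{ij}(0)\mid X_i,N_i}(m,\boldsymbol g),
\]
where $G_{ij}(0):=G\{\underline M_{i(-j)}(0)\}$. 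The proof then has two pieces: the outcome piece, which identifies the inner regression, and the mediator-coupling piece, which identifies the joint law of $(M_{ij}(1),G_{ij}(0))$ up to $\boldsymbol\rho_{1G}$.

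\textbf{Outcome piece.} Assumption~\ref{ass:si_mediators} gives $Y_{ij}(1,s)\indep \underline M_i(a')\mid A_i,X_i,N_i$ for both $a'$. Treatment randomization (Assumption~\ref{ass:treatment_randomization}) lets us insert $A_i=1$, so the conditioning on $(M_{ij}(1),G_{ij}(0))$ can be dropped:
\[
\E\{Y_{ij}(1,s)\mid M_{ij}(1),G_{ij}(0),X_i,N_i\}=\E\{Y_{ij}(1,s)\mid A_i=1,X_i,N_i\}.
\]
Reversing the same step with $a'=1$, we reinsert $M_{ij}(1)=m$ and $G\{\underline M_{i(-j)}(1)\}=\boldsymbol g$. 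Consistency (Assumption~\ref{ass:bridge_consistency}) then replaces the potential quantities by the observed $(M_{ij},G(\underline M_{i(-j)}),Y_{ij})$ under $A_i=1$. This yields the observed regression in \eqref{eq:omega_partial_id}, and positivity (Assumption~\ref{ass:bridge_positivity}) ensures it is well defined on the integration support.

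One subtlety needs care. Joint independence from the vector $\underline M_i(a')$ is needed, rather than mere marginal statements, so that conditioning on the mixed pair is legitimate. Strictly, the assumption is stated for each $a'$ separately, not for the cross-world vector $(M_{ij}(1),\underline M_{i(-j)}(0))$. I would read Assumption~\ref{ass:si_mediators} as holding jointly over $\{\underline M_i(0),\underline M_i(1)\}$, or simply state that this is the intended reading. I expect this to be the main obstacle, and I would handle it by making the reading explicit.

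\textbf{Mediator piece.} Sklar's theorem factors the joint law into the marginal of $M_{ij}(1)$ and the conditional law of $G_{ij}(0)$ given $M_{ij}(1)=m$. For the marginal, randomization and consistency give $H_{1j}$, the law of $M_{ij}$ given $A_i=1,X_i,N_i$, which is the factor $f_{M_{ij}\mid A_i=1,X_i,N_i}$. For the marginals $Q_{0r,j}$, randomization and consistency identify them from the control arm, and they also identify the same-world block $R_G$. Under Assumption~\ref{ass:gaussian_copula}, the conditional law follows from \eqref{eq:summary_copula_conditional}. Set $z=\Phi_1^{-1}\{H_{1j}(m)\}$, draw $(Z_{\mu j},Z_{\sigma j})$ from $\mathrm N_2\{\boldsymbol\rho_{1G}z,R_G-\boldsymbol\rho_{1G}\boldsymbol\rho_{1G}^\top\}$, and map back through $Q_{0r,j}^{-1}\circ\Phi_1$. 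This defines $q_{G_{ij}(0)\mid M_{ij}(1)=m;\boldsymbol\rho_{1G}}$. Condition \eqref{eq:summary_copula_pd} guarantees that the conditional covariance is positive definite, since it is the Schur complement. For discrete or atomic marginals, I would note that the map is defined via generalized inverses.

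Substituting both pieces and averaging over $j$ and the outer law gives \eqref{eq:omega_partial_id}. Since $\delta_1$ is identified by the standard CRT argument and $\theta$ is identified under Assumption~\ref{ass:si_mediators} (the case $\bar\Delta=0$), SME and IME follow for each fixed $\boldsymbol\rho_{1G}$.
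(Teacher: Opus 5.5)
Your proposal is correct and takes essentially the same route as the paper's proof. It uses iterated expectations over the cross-world exposure $(M_{ij}(1),G_{ij}(0))$, replaces conditioning on $G_{ij}(0)=\boldsymbol g$ by $G_{ij}(1)=\boldsymbol g$ in the outcome regression via Assumption~\ref{ass:si_mediators} and consistency, and pushes the conditional Gaussian law \eqref{eq:summary_copula_conditional} through the identified marginals. The paper inserts $A_i=1$ at the outset and removes it from the peer-summary law at the end, which is only bookkeeping. Your point that Assumption~\ref{ass:si_mediators} must be read jointly over $\{\underline M_i(0),\underline M_i(1)\}$ is well taken: the paper's Step~2 uses exactly that reading implicitly when it asserts independence from ``the potential mediator vectors,'' and the per-$a'$ statement alone would not justify conditioning on the mixed pair.
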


The formula separates the observed data components from the sensitivity component. The treated outcome regression, marginal CDFs, and $R_G$ are learned from the observed data. The operative vector $\boldsymbol\rho_{1G}$ enters only through the conditional control peer-summary law given the individual's treated mediator. Setting $\boldsymbol\rho_{1G}=\mathbf0$ recovers the Gaussian-copula analogue of between-individual cross-world independence. In the empirical analysis, we vary each component over a signed grid while fixing the other at zero and use a symmetric bivariate prior restricted to the positive-definite ellipse.

\section{Empirical analysis of the PPACT pragmatic clinical trial}
\label{sec:empirical_analysis}
We applied the two-layer sensitivity analysis to the Pain Program for Active Coping and Training (PPACT) CRT introduced in Section \ref{sec:motivating_example}, which randomized 106 primary PCP clusters across three Kaiser Permanente regions to the PPACT CBT intervention or usual care. The unit of analysis was the patient. Let $A_i$ denote cluster-level randomized assignment for PCP cluster $i$. We defined the patient-level mediator $M_{ij}$ as the PEGS pain impact score at the 6-month assessment (a continuous measure on the 0--10 scale) for patient $j$ in cluster $i$ and used the three-component exposure $S_{ij}=(M_{ij},\overline{M}_{i,-j},D_{i,-j})$, containing own PEGS and the leave-one-out peer mean and sample SD. The outcome $Y_{ij}$ was the RMDQ score at the 12-month assessment (a continuous pain-related-disability measure on the 0--1 scale). The added SD distinguishes clusters in which peers have the same average pain impact but substantially different degrees of heterogeneity.

\subsection{Model specifications}
For the observed data model (Step 1 of Supplementary Material Section \ref{sec:posterior_inference}), we fit the nested common-atoms enriched Dirichlet process (CA-EDP) of \citet{ohnishi2026identification}. This choice responds to two features of PPACT. First, CA-EDP jointly models the baseline covariate distribution, so ignorable missing baseline covariates can be updated inside posterior inference rather than fixed by preprocessing. Second, its enriched layer separates flexible covariate adjustment from clustering of the response law, reducing the tendency of high-dimensional covariates to generate spurious outcome components.

With $R_{ij}=(M_{ij},Y_{ij})$, a truncated representation of the cluster-conditional joint law used for PPACT is
\begin{equation}
\label{eq:CAEDPmodel}
\begin{split}
&p\{N_i,V_i,(C_{ij},R_{ij})_{j=1}^{N_i}\}\\
&\quad=\sum_{k=1}^{K}\pi_k p(N_i,V_i\mid\lambda_k)
\prod_{j=1}^{N_i}
\left[
\sum_{l=1}^{L}w^{\theta}_{kl}p(R_{ij}\mid A_i,N_i,C_{ij},\theta_l)
\left\{\sum_{q=1}^{Q}w^{\phi}_{klq}p(C_{ij}\mid\phi_q)\right\}
\right].
\end{split}
\end{equation}
Equation~\eqref{eq:CAEDPmodel} displays the two covariate levels explicitly: the outer kernel models $(N_i,V_i)$, whereas the enriched covariate atoms model the individual-level vector $C_{ij}$. The response kernel conditions directly on $(A_i,N_i,C_{ij})$ and depends on $V_i$ through the cluster component $k$ and its response weights $w^\theta_{kl}$. Consequently, the induced unit-level response and mediator laws are indexed by the combined vector $X_{ij}=(V_i,C_{ij})$, consistently with the setup in Section~\ref{sec:bridge_score_sa}.

The response atoms $\theta_l$ and individual-level covariate atoms $\phi_q$ are common across cluster components $k$; only their weights vary. This common-atoms construction permits partial sharing across provider clusters without the degeneracy induced by a nested process that equates entire distributions after a single shared atom. The enriched weights $w^{\phi}_{klq}$ allow units with similar response behavior to retain a flexible distribution for $C_{ij}$. We use Gaussian regression atoms for both continuous responses and multivariate Gaussian atoms for the standardized individual-level baseline covariates. Missing coordinates of $C_{ij}$ are sampled conditionally on their observed coordinates, current covariate atom, and mediator regression contribution at each MCMC iteration.

The mediator regression includes an intercept, $A_i$, $N_i$, and the components of $C_{ij}$: baseline age, sex, number of chronic comorbid conditions, anxiety or depression, number of chronic pain conditions, baseline morphine-equivalent opioid dose, baseline benzodiazepine dispensing, baseline RMDQ, and baseline PEGS. At each MCMC iteration, the current CA-EDP mediator law yields $f_0(S_{ij}\mid X_{ij})=f_0(S_{ij}\mid V_i,C_{ij})$ and $f_1(S_{ij}\mid X_{ij})=f_1(S_{ij}\mid V_i,C_{ij})$. The own-mediator density is evaluated as the CA-EDP mixture density. The peer mean is approximated by a Normal distribution using posterior mixture moments, and the positive peer SD is approximated on the log scale using its first two moment-matched parameters. The outcome regression includes $A_i$, $M_{ij}$, $\overline M_{i,-j}$, $D_{i,-j}$, the two log-bridge features, their product, and interactions between $A_i$ and each exposure component. The bridge features are reconstructed at every Gibbs iteration, so they remain posterior functionals of the mediator law rather than fixed preprocessing variables.

We used empirical Bayes Gaussian prior centers from the corresponding least-squares regressions. To avoid making those data-centered priors overly informative, their coefficient standard deviations were the least-squares standard errors inflated by $\sqrt{n/5}=\sqrt{850/5}\approx13.04$ (equivalently, their covariance matrices were multiplied by $n/5$). We used Normal--Inverse--Wishart base measures for the $C_{ij}$ atoms and unit-shape Gamma priors for the three concentration parameters. The revised preprocessing retained 22 missing entries among the individual-level baseline covariates $C_{ij}$, all of which were updated within the sampler. The truncation levels were $(K,L,Q)=(3,8,3)$, and allocation counts for all three layers were monitored during the retained chain. We discarded 2000 burn-in iterations, ran 4000 additional iterations, and retained every fourth draw, giving 1000 posterior draws. At each retained draw, 10 mediator draws were used for g-computation.

We standardized directly over the covariate distribution modeled by the CA-EDP. At each retained iteration, we generated a posterior-predictive population of 106 clusters. For predictive cluster $i$, we sampled $K_i\sim\operatorname{Categorical}(\boldsymbol\pi)$ and $N_i-3\sim\operatorname{Poisson}(\lambda_{K_i})$, where the shift by three reflects the minimum analyzed cluster size and ensures at least two peers remain to define the leave-one-out peer standard deviation. Then, for each unit, $L_{ij}\sim\operatorname{Categorical}(\boldsymbol w^\theta_{K_i})$, $Q_{ij}\sim\operatorname{Categorical}(\boldsymbol w^\phi_{K_iL_{ij}})$, and $C_{ij}\sim\mathrm N(\boldsymbol\mu_{Q_{ij}},\boldsymbol\Sigma_{Q_{ij}})$. Counterfactual mediator and outcome draws were then averaged with weight $1/106$ per predictive cluster and $1/N_i$ per patient within cluster. Thus the standardization integrates over the fitted joint law of cluster size and baseline covariates, including its posterior uncertainty, while preserving the equal-cluster-weighted target.

This specification yields posterior draws of the observed data law
\[
\Theta_{\mathrm{obs}}^{(t)}
=\Bigl(f_0^{(t)},f_1^{(t)},\mu_1^{(t)},\delta_0^{(t)},\delta_1^{(t)}\Bigr),
\]
where the bridge features entering $\mu_1^{(t)}$ are updated jointly with the mediator law at each iteration.
Within each retained iteration, the residual-budget calibration of Steps~3--4 of Section~\ref{sec:posterior_inference} was fitted to observed treated-arm outcomes and evaluated at the same $S(0,0)$ draws used for g-computation. The posterior mean averaged counterfactual residual scale $\widehat\sigma_{\mathrm{res}}^{(t)}$ was $0.137$, with a 95\% posterior interval of $[0.132,0.142]$. The corresponding global auxiliary residual scale $\widehat\sigma_\eta^{(t)}$ had posterior mean $0.184$, with 95\% interval $[0.181,0.187]$. At the primary calibration $(k_a,g_a)=(0.5,1.5)$, the posterior mean envelope for $\bar\Delta$ was $\bar\Xi_\Delta=0.137$, with 95\% interval $[0.132,0.142]$. These observed-data calibrations supply the scale of the sensitivity analysis; $k_a$, $g_a$, and the triangular mode $q$ remain analyst-chosen.

\subsection{Benchmark results under cross-world assumptions}
We first present benchmark results under sequential ignorability and between-individual cross-world mediator independence.
Under these cross-world assumptions, the posterior mean total effect on the 12-month RMDQ disability scale was $-0.056$ with a 95\% credible interval of $[-0.089,-0.026]$. This equal-cluster-weighted, CA-EDP posterior-predictive standardized estimate is directionally consistent with the crude RMDQ difference of roughly $-0.07$. The posterior mean natural indirect effect was $-0.028$ with a 95\% credible interval of $[-0.061,0.002]$, whereas the posterior mean natural direct effect was $-0.028$ with a 95\% credible interval of $[-0.064,0.008]$. Thus the CA-EDP posterior indicates a negative total effect on disability, while the indirect and direct intervals each narrowly include zero at the no-correction anchor. We treat this decomposition as the starting point for sensitivity analysis rather than as evidence that 6-month pain impact is the sole mechanism of CBT.

At $\boldsymbol\rho_{1G}=\mathbf0$, the posterior mean spillover mediation effect was $-0.004$ with a 95\% credible interval of $[-0.018,0.009]$, while the posterior mean individual mediation effect was $-0.024$ with a 95\% credible interval of $[-0.052,0.002]$; the two pieces add to the natural indirect effect. The intervention lowered 6-month PEGS, PEGS is positively associated with later disability, and the individual pathway carries most of the negative indirect signal, whereas the combined peer-mean/peer-variability pathway remains close to zero.

\begin{figure}[t]
\centering
\includegraphics[width=0.75\textwidth]{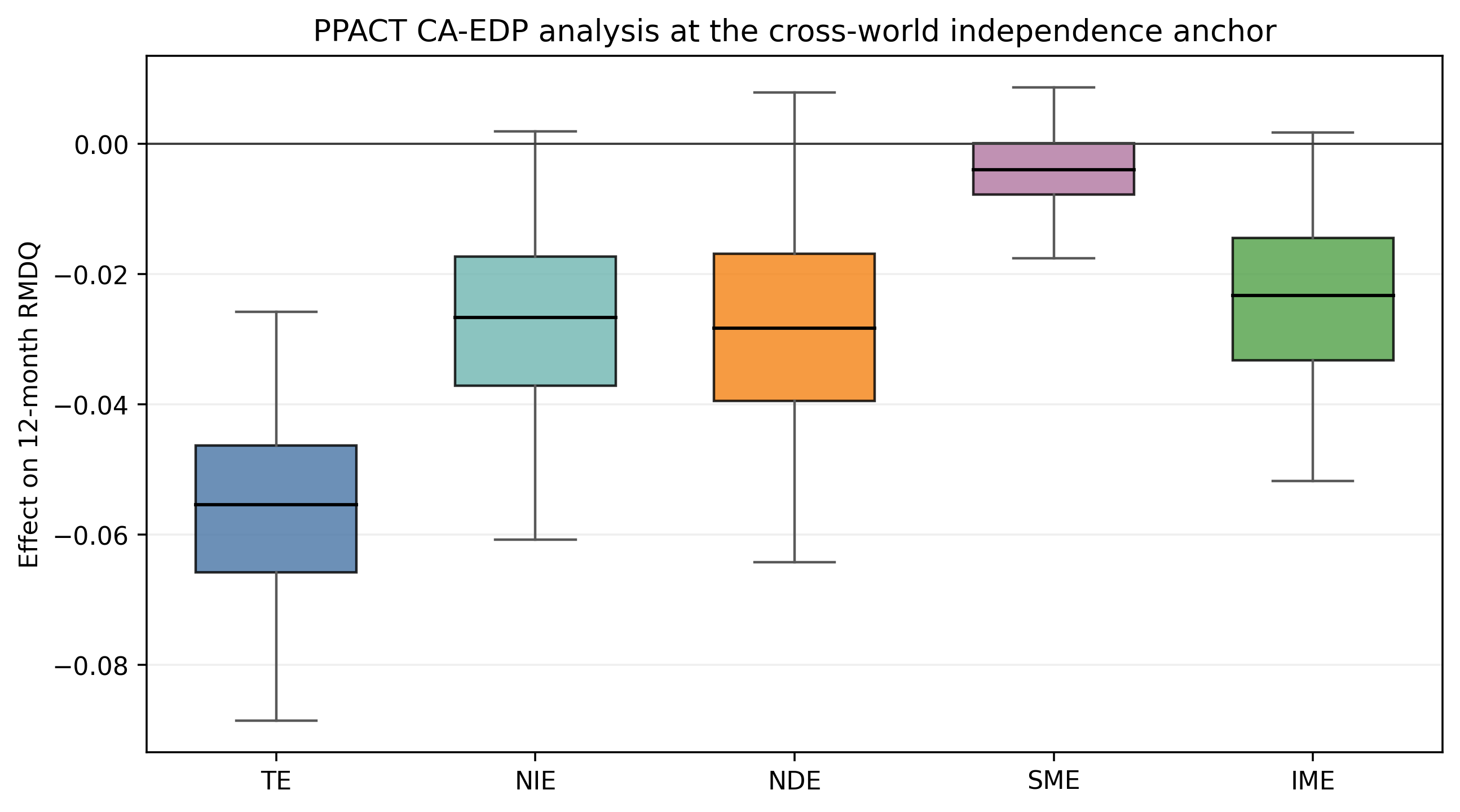}
\caption{CA-EDP posterior distributions for the total, natural indirect, natural direct, spillover mediation, and individual mediation effects under sequential ignorability and $\boldsymbol\rho_{1G}=\mathbf0$. Boxes display interquartile ranges and whiskers display 95\% credible intervals. The total effect is negative; within the indirect pathway, the individual mediation effect carries the negative signal and the spillover effect is centered near zero.}
\label{fig:ppact_default_effects}
\end{figure}

Figure~\ref{fig:ppact_default_effects} summarizes the default posterior with no sensitivity corrections. Only the $95\%$ posterior interval for $\mathrm{TE}$ excludes zero; the intervals for $\mathrm{NIE}$, $\mathrm{NDE}$, $\mathrm{SME}$, and $\mathrm{IME}$ include zero. The posterior probabilities of a negative effect were $1.000$ for $\mathrm{TE}$, $0.958$ for $\mathrm{NIE}$, $0.933$ for $\mathrm{NDE}$, $0.742$ for $\mathrm{SME}$, and $0.965$ for $\mathrm{IME}$. Therefore, the total effect is strongly negative in the posterior, consistent with a CBT-induced reduction in 12-month disability, while neither component of the indirect pathway is separated from zero at the cross-world independence anchor.

\subsection{Sensitivity analysis for sequential ignorability}
\label{sec:empirical_analysis_SI}
We next applied the bridge score sensitivity analysis using the observed-data residual calibration in Section~\ref{sec:bridge_score_sensitivity_crt}. Because $\mathrm{TE}=\delta_1-\delta_0$ depends only on observed means, it is unchanged by this module, so the analysis focuses on $\mathrm{NIE}$ and $\mathrm{NDE}$. We varied two inputs in turn. First, we fixed $(k_a,g_a)=(0.5,1.5)$ and moved the normalized triangular-prior mode $q$ from zero to one, providing a tipping-point analysis over the full range of default triangular directions. Second, we fixed the neutral mode $q=0.5$ and varied $(k_a,g_a)$ over a $3\times3$ grid.

\subsubsection{Triangular-prior mode for \texorpdfstring{$\bar{\Delta}$}{Delta}}
For the first block, $(k_a,g_a)=(0.5,1.5)$ corresponds to half of the bridge-score residual outcome variation budgeted to confounding and a 50\% maximum latent-density tilt induced by $S=s$ after conditioning on $B_X(s)=b$. We varied $q\in\{0,0.25,0.5,0.75,1\}$. The middle value is the symmetric neutral default; moving toward zero favors negative $\bar\Delta$, and moving toward one favors positive $\bar\Delta$. Table~\ref{tab:ppact_sign_probs} reports posterior sign probabilities, and Figure~\ref{fig:ppact_sensitivity} displays the corresponding distributions. The posterior probability $\PP(\mathrm{TE}<0\mid\text{data})$ remains one throughout because the sensitivity correction only reallocates the total effect.

\begin{table}[t]
\centering
\caption{Posterior probability that $\mathrm{NIE}$ and $\mathrm{NDE}$ fall below zero under the sequential-ignorability anchor and five triangular-prior modes $q$ at $(k_a,g_a)=(0.5,1.5)$. Values below zero on the RMDQ disability scale correspond to improvements.}
\label{tab:ppact_sign_probs}
\small
\begin{tabular}{lc rr}
\hline
Prior specification & $q$ & $\PP(\mathrm{NIE}<0\mid \text{data})$ & $\PP(\mathrm{NDE}<0\mid \text{data})$ \\
\hline
Sequential ignorability (no correction) & -- & 95.8\% & 93.3\% \\
Decreasing triangular                 & $0$    & 36.0\% & 84.2\% \\
Mode one-quarter                      & $0.25$ & 50.0\% & 81.7\% \\
Symmetric triangular default          & $0.50$ & 65.7\% & 67.9\% \\
Mode three-quarters                   & $0.75$ & 78.5\% & 48.4\% \\
Increasing triangular                 & $1$    & 81.7\% & 37.7\% \\
\hline
\end{tabular}
\end{table}

\begin{figure}[t]
\centering
\includegraphics[width=\textwidth]{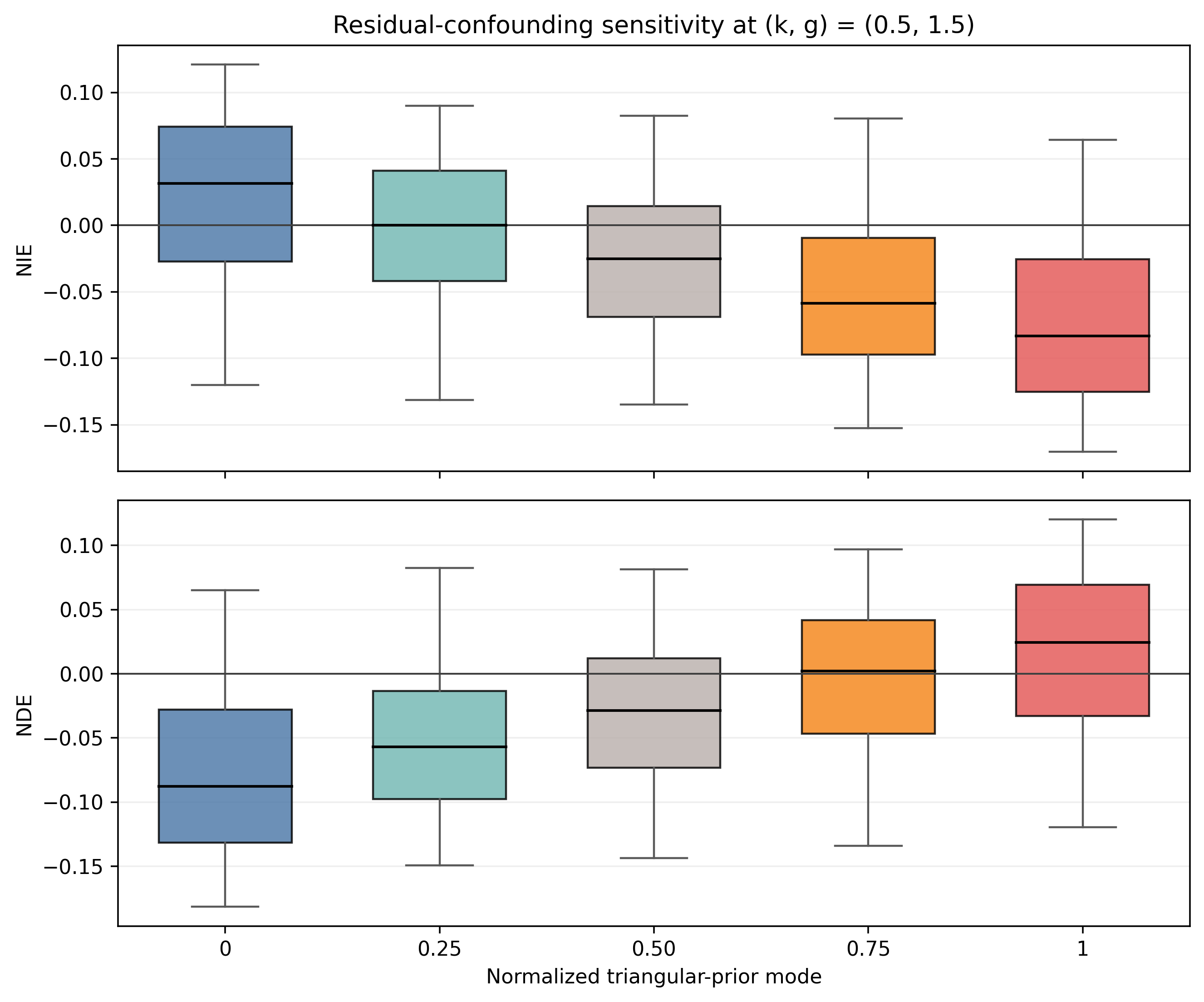}
\caption{Tipping-point sensitivity analysis over the normalized triangular-prior mode $q$ at fixed $(k_a,g_a)=(0.5,1.5)$. The top panel shows $\mathrm{NIE}$ and the bottom panel shows $\mathrm{NDE}$; whiskers display 95\% credible intervals.}
\label{fig:ppact_sensitivity}
\end{figure}

At the neutral mode $q=0.5$, the posterior means remain close to their anchors, but the 95\% intervals widen to $[-0.135,0.082]$ for $\mathrm{NIE}$ and $[-0.144,0.081]$ for $\mathrm{NDE}$; their negative sign probabilities fall to $0.657$ and $0.679$. Directional modes reveal tipping behavior. At $q=0$, the posterior mean $\mathrm{NIE}$ becomes positive ($0.020$) and $\PP(\mathrm{NIE}<0)=0.360$, while the mean $\mathrm{NDE}$ becomes more negative ($-0.077$). At $q=1$, the mean $\mathrm{NIE}$ is $-0.072$ and $\PP(\mathrm{NIE}<0)=0.817$, whereas the mean $\mathrm{NDE}$ becomes positive ($0.016$) and its negative sign probability falls to $0.377$. Thus, plausible directional latent confounding can move posterior support between the indirect and direct pathways.

\subsubsection{Envelope width for \texorpdfstring{$\bar{\Delta}$}{Delta}}
For the second block, we held the triangular prior at its neutral mode $q=0.5$ and varied $k_a\in\{0.25,0.5,1\}$ and $g_a\in\{1.25,1.5,2\}$. The corner $(0.25,1.25)$ gives the tightest envelope, while $(1,2)$ gives the most conservative envelope. Table~\ref{tab:ppact_envelope_signs} reports posterior sign probabilities, and Figure~\ref{fig:ppact_envelope} displays the posteriors.

\begin{table}[t]
\centering
\caption{Posterior probabilities that $\mathrm{NIE}$ and $\mathrm{NDE}$ fall below zero under the neutral triangular prior $q=0.5$, across the $(k_a,g_a)$ envelope grid. The sequential-ignorability row reports the uncorrected anchor.}
\label{tab:ppact_envelope_signs}
\small
\begin{tabular}{cc cc}
\hline
$k_a$ & $g_a$ & $\PP(\mathrm{NIE}<0\mid \text{data})$ & $\PP(\mathrm{NDE}<0\mid \text{data})$ \\
\hline
\multicolumn{2}{l}{Sequential ignorability} & 95.8\% & 93.3\% \\
$0.25$ & $1.25$ & 81.7\% & 78.8\% \\
$0.25$ & $1.50$ & 74.0\% & 73.8\% \\
$0.25$ & $2.00$ & 69.0\% & 67.5\% \\
$0.50$ & $1.25$ & 74.9\% & 73.7\% \\
$0.50$ & $1.50$ & 65.7\% & 67.9\% \\
$0.50$ & $2.00$ & 65.1\% & 62.1\% \\
$1.00$ & $1.25$ & 68.6\% & 68.4\% \\
$1.00$ & $1.50$ & 62.6\% & 64.1\% \\
$1.00$ & $2.00$ & 61.3\% & 59.7\% \\
\hline
\end{tabular}
\end{table}

\begin{figure}[t]
\centering
\includegraphics[width=\textwidth]{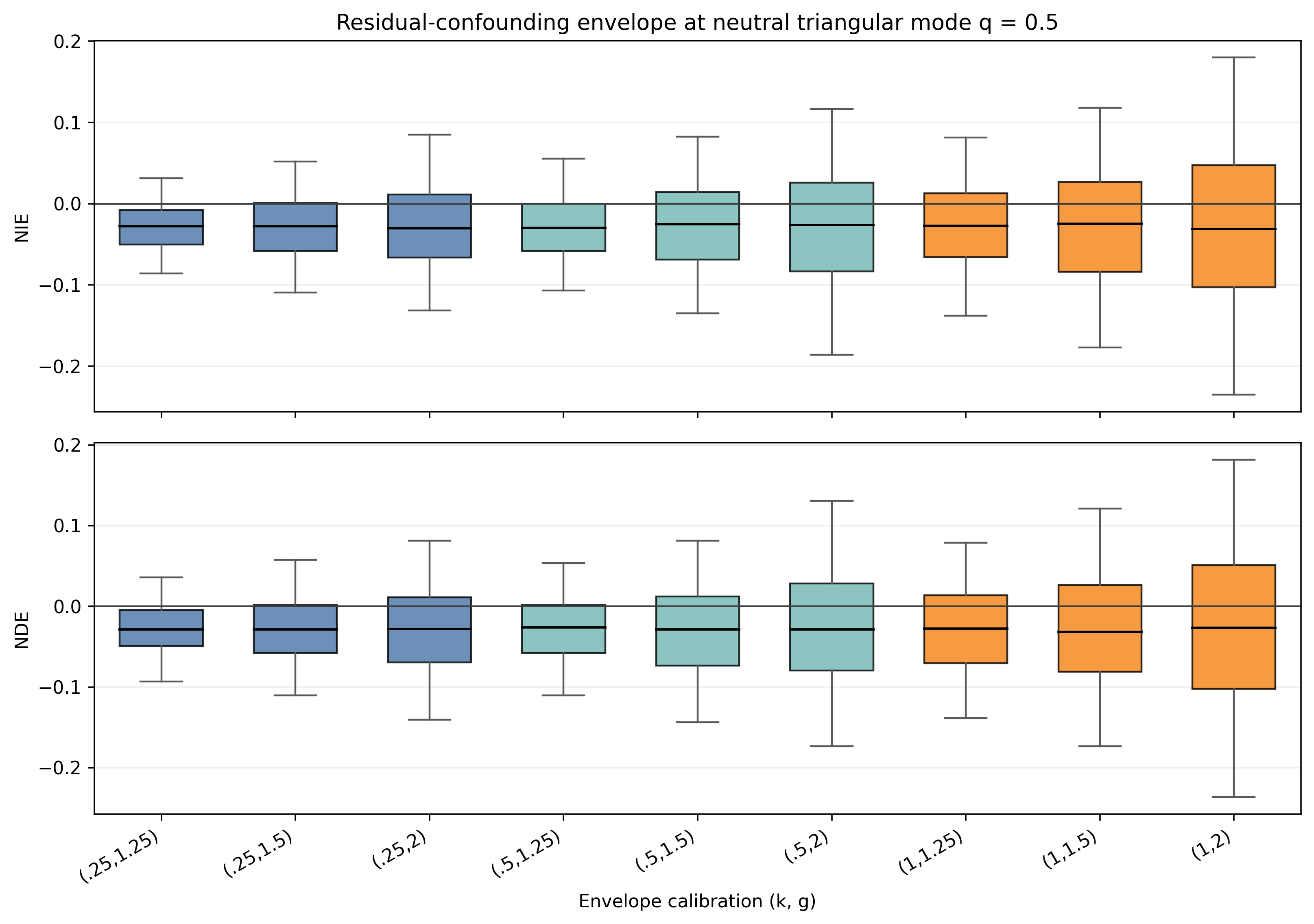}
\caption{Sensitivity analysis over envelope width $(k_a,g_a)$ at the neutral triangular mode $q=0.5$. The top panel shows $\mathrm{NIE}$ and the bottom panel shows $\mathrm{NDE}$; whiskers display 95\% credible intervals.}
\label{fig:ppact_envelope}
\end{figure}

Under the neutral prior, support for negative $\mathrm{NIE}$ and $\mathrm{NDE}$ generally declines as the envelope widens. At $(0.25,1.25)$, the sign probabilities are $0.817$ and $0.788$; at $(1,2)$ they fall to $0.613$ and $0.597$. Corrected posterior medians remain negative throughout the grid, but every interval includes zero. Envelope widening alone therefore does not reverse the median decomposition, yet even the tight calibration leaves appreciably more uncertainty than the no-correction anchor.

Taken together, these results indicate that the uncorrected finding of a small negative mediated effect through 6-month pain impact and a moderate negative residual direct effect is fragile to plausible directional violations of sequential ignorability and robust only to mild envelope widening. The data constrain the cumulative bias $\bar{\Delta}$, but they do not by themselves identify a robust signed conclusion about which pathway carries the CBT effect on 12-month disability.

\subsection{Sensitivity analysis for cross-world mediator independence}
\label{sec:empirical_analysis_copula}
We next applied the trivariate summary-level Gaussian copula of Section~\ref{sec:cross_world_omega}, holding $\bar\Delta=0$ so that variation is attributable only to cross-world mediator dependence. The posterior mean (95\% interval) of the control same-world latent correlation between the peer mean and peer SD was $-0.098$ ($[-0.162,-0.013]$). We varied each component of $\boldsymbol\rho_{1G}=(\rho_{1\mu},\rho_{1\sigma})^\top$ over
\[
\{-0.75,-0.50,-0.25,0,0.25,0.50,0.75\}
\]
while setting the other component to zero. We also assigned a uniform prior over the part of $[-0.75,0.75]^2$ satisfying the positive-definite ellipse in \eqref{eq:summary_copula_pd}. The mixed mean $\omega$ uses the same treated own-mediator draw as $\delta_1$ and changes only its peer-summary coordinates, keeping the spillover contrast focused on the joint peer-distribution pathway.
The trial cannot identify the signs of these cross-world correlations, but the application offers a weak substantive guide. Shared provider practice and cluster patient mix could make patients with high treated PEGS tend to belong to clusters whose peers would also have high PEGS under usual care, suggesting $\rho_{1\mu}>0$. There is no similarly compelling direction for $\rho_{1\sigma}$: greater underlying pain burden could either increase peer heterogeneity or shift the whole peer distribution together without increasing its spread. Negative values represent compensatory or regression-to-the-mean relationships across the own and peer counterfactuals. We therefore report both signs for both components, regard positive $\rho_{1\mu}$ as somewhat more clinically plausible, and avoid using that intuition as identification information.

\begin{table}[t]
\centering
\caption{Posterior summaries of spillover and individual mediation effects as dependence with the control peer mean ($\rho_{1\mu}$) or peer SD ($\rho_{1\sigma}$) is varied, with the other component fixed at zero. The final row uses a bivariate uniform-box prior restricted to the positive-definite ellipse. The bridge correction is held at $\bar\Delta=0$.}
\label{tab:ppact_copula}
\small
\begin{adjustbox}{width=\textwidth}
\begin{tabular}{ll rrrr rrrr}
\hline
& & \multicolumn{4}{c}{$\mathrm{SME}$} & \multicolumn{4}{c}{$\mathrm{IME}$} \\
Varied component & value & mean & 2.5\% & 97.5\% & $\PP(<0)$ & mean & 2.5\% & 97.5\% & $\PP(<0)$ \\
\hline
$\rho_{1\mu}$ & $-0.75$ & -0.003 & -0.020 & 0.014 & 0.652 & -0.025 & -0.054 & 0.003 & 0.957 \\
$\rho_{1\mu}$ & $-0.50$ & -0.003 & -0.019 & 0.011 & 0.669 & -0.024 & -0.053 & 0.002 & 0.959 \\
$\rho_{1\mu}$ & $-0.25$ & -0.004 & -0.018 & 0.010 & 0.712 & -0.024 & -0.052 & 0.002 & 0.962 \\
$\rho_{1\mu}$ & $0.25$  & -0.004 & -0.018 & 0.008 & 0.764 & -0.023 & -0.050 & 0.002 & 0.962 \\
$\rho_{1\mu}$ & $0.50$  & -0.005 & -0.020 & 0.009 & 0.763 & -0.023 & -0.050 & 0.003 & 0.957 \\
$\rho_{1\mu}$ & $0.75$  & -0.006 & -0.021 & 0.009 & 0.770 & -0.022 & -0.050 & 0.004 & 0.941 \\
\addlinespace
$\rho_{1\sigma}$ & $-0.75$ & -0.008 & -0.026 & 0.011 & 0.831 & -0.020 & -0.052 & 0.010 & 0.886 \\
$\rho_{1\sigma}$ & $-0.50$ & -0.007 & -0.022 & 0.010 & 0.822 & -0.021 & -0.051 & 0.006 & 0.927 \\
$\rho_{1\sigma}$ & $-0.25$ & -0.005 & -0.019 & 0.008 & 0.799 & -0.022 & -0.052 & 0.004 & 0.945 \\
$\rho_{1\sigma}$ & $0.25$  & -0.003 & -0.017 & 0.011 & 0.655 & -0.025 & -0.053 & 0.001 & 0.968 \\
$\rho_{1\sigma}$ & $0.50$  & -0.001 & -0.018 & 0.014 & 0.552 & -0.027 & -0.055 & -0.001 & 0.977 \\
$\rho_{1\sigma}$ & $0.75$  &  0.000 & -0.020 & 0.018 & 0.470 & -0.028 & -0.058 & 0.001 & 0.971 \\
\addlinespace
Either          & $0$     & -0.004 & -0.018 & 0.009 & 0.742 & -0.024 & -0.052 & 0.002 & 0.965 \\
Bivariate box prior & -- & -0.004 & -0.021 & 0.013 & 0.698 & -0.024 & -0.053 & 0.004 & 0.946 \\
\hline
\end{tabular}
\end{adjustbox}
\end{table}

\begin{figure}[t]
\centering
\includegraphics[width=\textwidth]{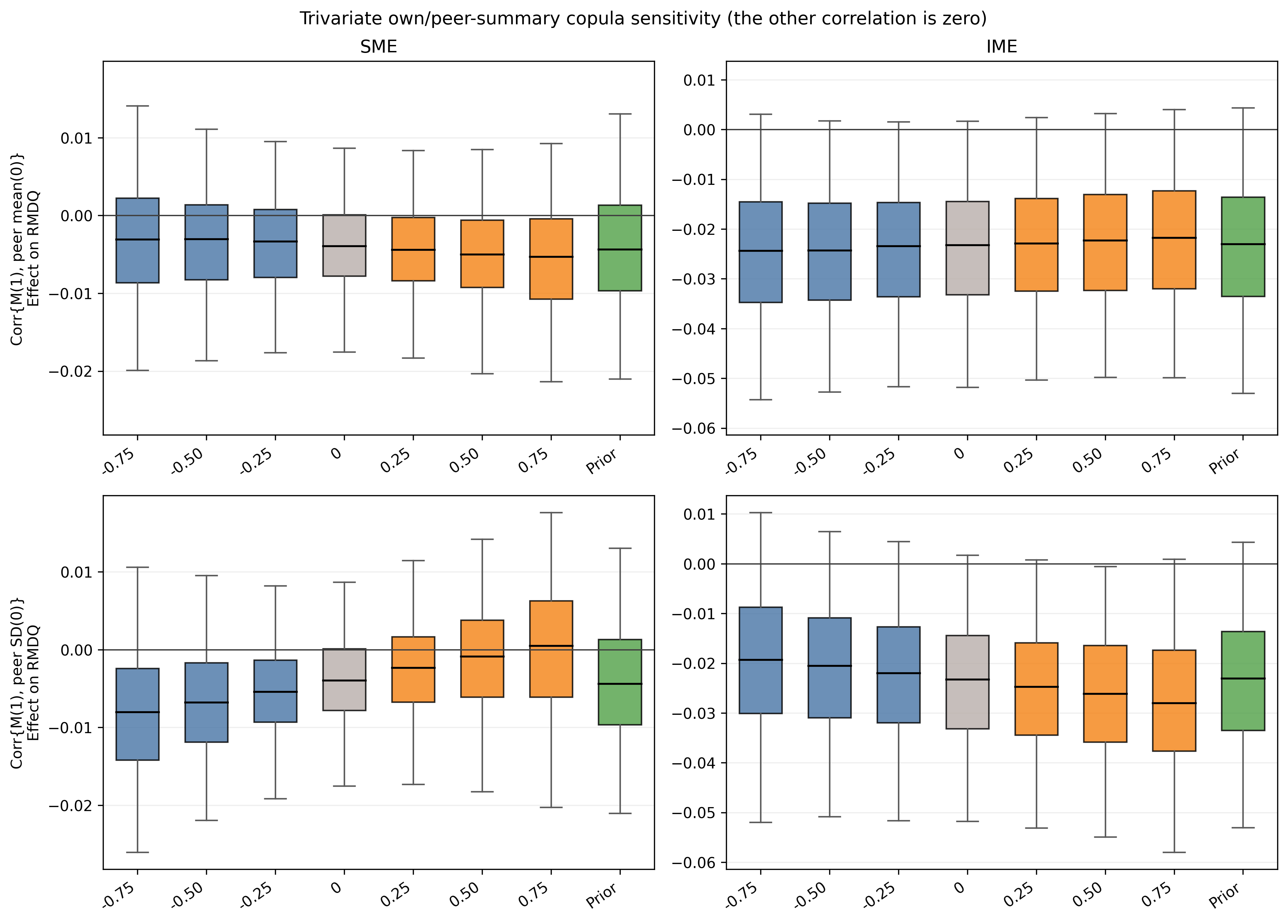}
\caption{Trivariate cross-world summary-copula sensitivity analysis. Rows vary dependence of the treated own mediator with the control peer mean ($\rho_{1\mu}$, top) and peer SD ($\rho_{1\sigma}$, bottom), holding the other component at zero. Columns show $\mathrm{SME}$ and $\mathrm{IME}$; the final point in each panel uses the bivariate box prior for $\boldsymbol\rho_{1G}$ restricted to the positive-definite ellipse. Whiskers display 95\% credible intervals.}
\label{fig:ppact_copula_tipping_point}
\end{figure}

Table~\ref{tab:ppact_copula} and Figure~\ref{fig:ppact_copula_tipping_point} show that peer variability matters enough to inspect but does not alter the substantive split here. Across the peer-mean grid, the mean $\mathrm{SME}$ moves from $-0.003$ at $\rho_{1\mu}=-0.75$ to $-0.006$ at $0.75$; across the peer-SD grid it moves oppositely, from $-0.008$ at $\rho_{1\sigma}=-0.75$ to approximately zero at $0.75$. These small shifts leave every $\mathrm{SME}$ interval covering zero, while the posterior mean $\mathrm{IME}$ remains negative throughout the grid. Under the bivariate prior, the mean $\mathrm{SME}$ is $-0.004$ ($[-0.021,0.013]$) and the mean $\mathrm{IME}$ is $-0.024$ ($[-0.053,0.004]$).

Substantively, the split agrees with the applied PPACT intuition. The intervention reduces 6-month PEGS pain impact on average, higher PEGS is associated with worse 12-month RMDQ disability, and the individual-pathway posterior mean captures most of this negative mediated component. The joint peer distribution contributes little once the individual's own PEGS and baseline covariates are accounted for, and the $\mathrm{SME}$ posterior remains centered near zero even when dependence with peer variability is allowed. We therefore interpret the default split as suggesting that any mediated signal is primarily individual rather than peer-mediated, while retaining the caveat from Section~\ref{sec:empirical_analysis_SI} that the larger natural indirect/direct split remains sensitive to residual mediator-outcome confounding.

\subsection{Joint sensitivity analysis}
\label{sec:empirical_joint}
Finally, we propagated both unidentified components in the same posterior analysis rather than varying them one at a time. We used the neutral triangular prior $q=0.5$ at $(k_a,g_a)=(0.5,1.5)$ and the bivariate uniform-box prior for $\boldsymbol\rho_{1G}$ restricted to its positive-definite ellipse under \eqref{eq:summary_copula_pd}. For each retained CA-EDP draw, these priors were sampled jointly while causal means were standardized over the same posterior-predictive covariate population. The spillover effect remains $\mathrm{SME}=\delta_1-\omega$, while the bridge correction enters $\mathrm{IME}=\omega-(\theta_{\mathrm{SI}}+\bar\Delta)$, preserving $\mathrm{SME}+\mathrm{IME}=\mathrm{NIE}$ draw by draw.

\begin{figure}[t]
\centering
\includegraphics[width=0.80\textwidth]{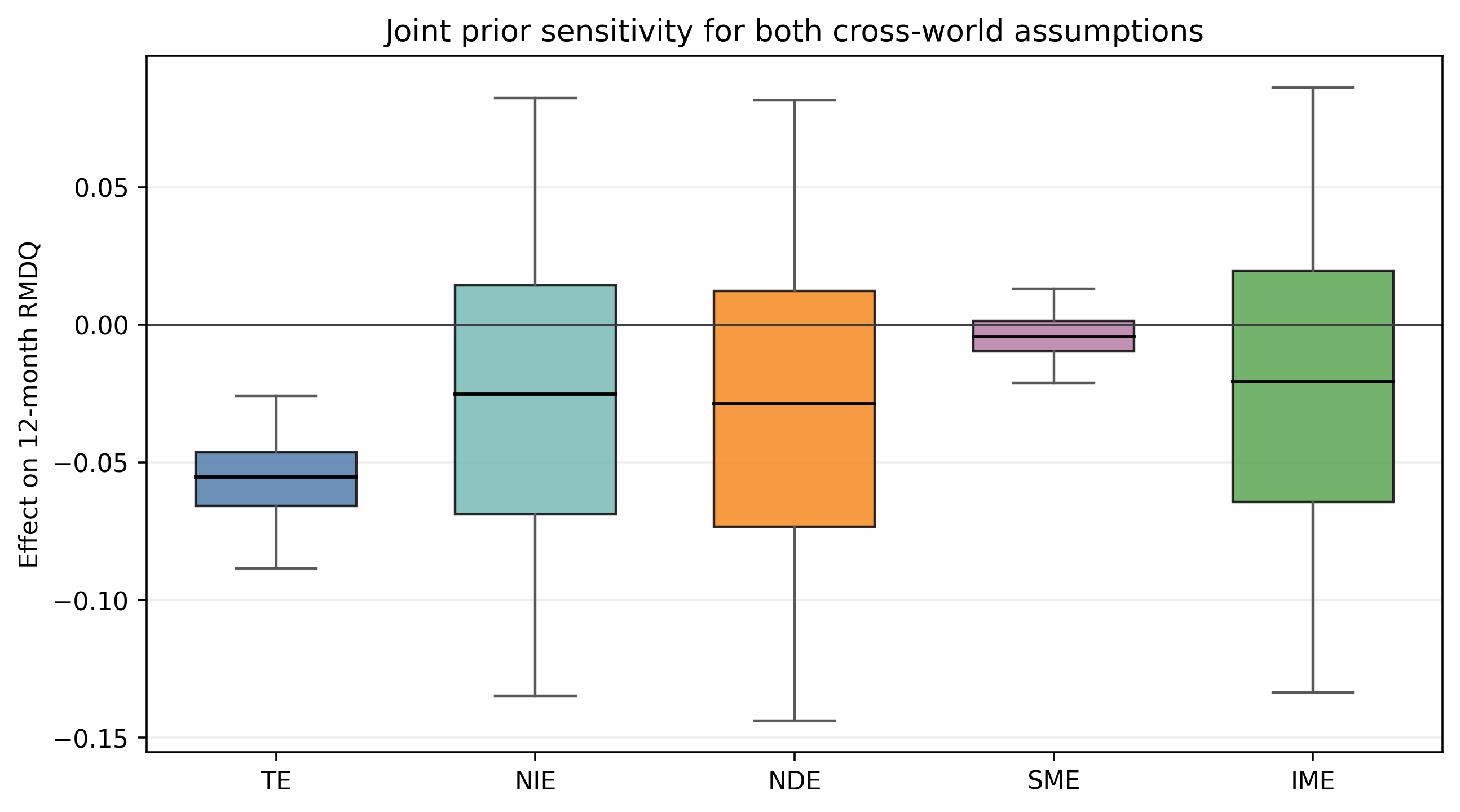}
\caption{Joint prior sensitivity analysis for residual mediator--outcome confounding and cross-world own-to-peer-distribution dependence. The triangular prior uses $q=0.5$ at $(k_a,g_a)=(0.5,1.5)$, and $\boldsymbol\rho_{1G}$ has a uniform prior over the admissible part of $[-0.75,0.75]^2$. Whiskers display 95\% credible intervals.}
\label{fig:ppact_joint_sensitivity}
\end{figure}

Figure~\ref{fig:ppact_joint_sensitivity} shows the resulting posterior. The total effect remains $-0.056$ with 95\% interval $[-0.089,-0.026]$. The joint prior means are $-0.026$ for $\mathrm{NIE}$, $-0.030$ for $\mathrm{NDE}$, $-0.004$ for $\mathrm{SME}$, and $-0.022$ for $\mathrm{IME}$. However, their 95\% intervals are $[-0.135,0.082]$, $[-0.144,0.081]$, $[-0.021,0.013]$, and $[-0.134,0.086]$, respectively. Corresponding negative sign probabilities are $0.657$, $0.679$, $0.698$, and $0.636$. Thus the treatment benefit is stable, but the joint analysis does not support a robust signed conclusion for any individual pathway once both cross-world assumptions are relaxed simultaneously.

\section{Concluding remarks}

We presented sensitivity analyses for causal mediation in the PPACT CRT. The first layer addresses possible violations of sequential ignorability by adapting bridge score sensitivity analysis \citep{ohnishi2026sensitivityanalysiscausalmediation} to $S_{ij}=(M_{ij},\overline M_{i,-j},D_{i,-j})$. Its residual-budget parameters $(k_a,g_a)$ set a data-calibrated envelope, and a single triangular mode $q$ expresses the direction of corrected bias. The second layer addresses cross-world mediator dependence required to split the natural indirect effect. Its trivariate Gaussian copula is limited to the individual's treated mediator and the control peer mean and SD, using the direct own-to-summary vector $\boldsymbol\rho_{1G}$. A CA-EDP observed-data model provides common response atoms, enriched covariate adjustment, posterior covariate updating, and posterior-predictive standardization over the modeled cluster-size and baseline-covariate distribution. Joint draws of $(\bar\Delta,\boldsymbol\rho_{1G})$ yield coherent sensitivity posteriors for both decompositions.

In PPACT, the total effect of CBT on 12-month disability was credibly negative. At the no-correction anchor, the small negative indirect signal was carried by the individual pain-impact pathway rather than peer spillover. Allowing cross-world dependence with peer variability changed the split by an amount comparable to allowing dependence with the peer mean, but neither component moved the spillover interval away from zero. By contrast, the natural indirect/direct decomposition widened sharply under even the neutral triangular residual-confounding prior, and directional modes shifted support between the two pathways. When both assumptions were relaxed jointly, every pathway-specific interval crossed zero even though the total effect remained negative. We therefore interpret the trial as evidence that CBT improved disability and as suggestive of an individual pain-impact mechanism, but not as providing assumption-robust attribution to a specific pathway.


\section*{Funding}
Research in this article was supported by the Patient-Centered Outcomes Research Institute\textsuperscript{\textregistered} (PCORI\textsuperscript{\textregistered} Award ME-2023C1-31350) and National Institutes of Health (R01 HL166324). The statements presented in this article are solely the responsibility of the authors and do not necessarily represent the views of PCORI\textsuperscript{\textregistered}, the National Institutes of Health, or their respective governing bodies.

\begin{singlespace}
\bibliographystyle{plainnat}
\bibliography{literature}

@misc{ohnishi2026sensitivityanalysiscausalmediation,
      title={Sensitivity analysis for causal mediation: bridge score, sharp sensitivity bounds, and calibration}, 
      author={Yuki Ohnishi and Fan Li},
      year={2026},
      eprint={2605.18724},
      archivePrefix={arXiv},
      primaryClass={stat.ME},
      url={https://arxiv.org/abs/2605.18724}, 
}

@article{SjolanderWaernbaum_2024,
url = {https://doi.org/10.1515/jci-2023-0070},
title = {Improved sensitivity bounds for mediation under unmeasured mediator–outcome confounding},
title = {},
author = {Arvid Sjölander and Ingeborg Waernbaum},
pages = {20230070},
volume = {12},
number = {1},
journal = {Journal of Causal Inference},
doi = {doi:10.1515/jci-2023-0070},
year = {2024},
lastchecked = {2026-05-02}
}

@article{le_Cessie_2016,
 ISSN = {10443983, 15315487},
 URL = {https://www.jstor.org/stable/26511892},
 author = {Saskia le Cessie},
 journal = {Epidemiology},
 number = {1},
 pages = {125--132},
 publisher = {Lippincott Williams & Wilkins},
 title = {Bias Formulas for Estimating Direct and Indirect Effects When Unmeasured Confounding Is Present},
 urldate = {2026-05-02},
 volume = {27},
 year = {2016}
}

@article{mccandless2017,
  title={Bayesian sensitivity analysis for unmeasured confounding in causal mediation analysis},
  author={McCandless, Lawrence C and Gustafson, Paul and Levy, Adrian},
  journal={Statistics in Medicine},
  volume={36},
  number={26},
  pages={4169--4183},
  year={2017},
  publisher={Wiley Online Library},
  url={https://onlinelibrary.wiley.com/doi/full/10.1002/sim.70548}
}

@misc{ohnishi2026identification,
      title={Identification and estimation of causal mechanisms in cluster-randomized trials with post-treatment confounding using Bayesian nonparametrics}, 
      author={Yuki Ohnishi and Michael J. Daniels and Lei Yang and Fan Li},
      year={2026},
      eprint={2510.16673},
      archivePrefix={arXiv},
      primaryClass={stat.ME},
      url={https://arxiv.org/abs/2510.16673}, 
}

@article{Cheng2026,
    author = {Cheng, Chao and Li, Fan},
    title = {Semiparametric causal mediation analysis of cluster-randomized trials for indirect and spillover effects},
    journal = {Biometrics},
    volume = {82},
    number = {1},
    pages = {ujag017},
    year = {2026},
    month = {02},
    issn = {0006-341X},
    doi = {10.1093/biomtc/ujag017},
    url = {https://doi.org/10.1093/biomtc/ujag017},
    eprint = {https://academic.oup.com/biometrics/article-pdf/82/1/ujag017/67010844/ujag017.pdf},
}

@article{debar2022ppact,
	author = {DeBar, Lynn and Mayhew, Meghan and Benes, Lindsay and Bonifay, Allison and Deyo, Richard A. and Elder, Charles R. and Keefe, Francis J. and Leo, Michael C. and McMullen, Carmit and Owen-Smith, Ashli and Smith, David H. and Trinacty, Connie M. and Vollmer, William M.},
	title = {A Primary Care-Based Cognitive Behavioral Therapy Intervention for Long-Term Opioid Users With Chronic Pain: A Randomized Pragmatic Trial},
	journal = {Annals of Internal Medicine},
	year = {2022},
	volume = {175},
	number = {1},
	pages = {46--55},
	doi = {10.7326/M21-1436}
}

@article{Ding2016_sharp,
    author = {Ding, Peng and Vanderweele, Tyler J.},
    title = {Sharp sensitivity bounds for mediation under unmeasured mediator-outcome
          confounding},
    journal = {Biometrika},
    volume = {103},
    number = {2},
    pages = {483-490},
    year = {2016},
    month = {04},
    issn = {0006-3444},
    doi = {10.1093/biomet/asw012},
    url = {https://doi.org/10.1093/biomet/asw012},
    eprint = {https://academic.oup.com/biomet/article-pdf/103/2/483/23590467/asw012.pdf},
}

@article{Tchetgen2012,
author = {Eric J. Tchetgen Tchetgen and Ilya Shpitser},
title = {{Semiparametric theory for causal mediation analysis: Efficiency bounds, multiple robustness and sensitivity analysis}},
volume = {40},
journal = {The Annals of Statistics},
number = {3},
publisher = {Institute of Mathematical Statistics},
pages = {1816 -- 1845},
year = {2012},
doi = {10.1214/12-AOS990},
URL = {https://doi.org/10.1214/12-AOS990}
}

@book{Daniels_Linero_Roy_2023
, address={Cambridge}
, title={Bayesian Nonparametrics for Causal Inference and Missing Data}
, doi={10.1201/9780429324222}
, publisher={Chapman and Hall/CRC}
, author={Daniels, Michael J. and Linero, Antonio and Roy, Jason}
, year={2023}}

@article{ohnishi2025,
author = {Yuki Ohnishi and Fan Li},
title = {{A Bayesian nonparametric approach to mediation and spillover effects with multiple mediators in cluster-randomized trials}},
journal = {Journal of the American Statistical Association},
volume = {0},
number = {ja},
pages = {1--20},
year = {2025},
publisher = {ASA Website},
doi = {10.1080/01621459.2025.2544366},
URL = { 
        https://doi.org/10.1080/01621459.2025.2544366
},
eprint = { 
        https://doi.org/10.1080/01621459.2025.2544366
}
}

@article{VanderWeele2014_effect_decomposition,
author = {VanderWeele, Tyler J. and Vansteelandt, Stijn and Robins, James M.},
address = {Philadelphia, PA},
copyright = {Copyright © 2014 Lippincott Williams & Wilkins, Inc},
issn = {1044-3983},
journal = {Epidemiology (Cambridge, Mass.)},
language = {eng},
number = {2},
pages = {300-306},
publisher = {Lippincott Williams & Wilkins, Inc},
title = {Effect Decomposition in the Presence of an Exposure-Induced Mediator-Outcome Confounder},
volume = {25},
year = {2014},
}

@article{Imai2010,
author = {Kosuke Imai and Luke Keele and Teppei Yamamoto},
title = {{Identification, Inference and Sensitivity Analysis for Causal Mediation Effects}},
volume = {25},
journal = {Statistical Science},
number = {1},
publisher = {Institute of Mathematical Statistics},
pages = {51 -- 71},
year = {2010},
doi = {10.1214/10-STS321},
URL = {https://doi.org/10.1214/10-STS321}
}

@article{VanderWeele2013_CRT,
   author = {Tyler J. VanderWeele and Guanglei Hong and Stephanie M. Jones and Joshua L. Brown},
   doi = {10.1080/01621459.2013.779832},
   issn = {01621459},
   issue = {502},
   journal = {Journal of the American Statistical Association},
   pages = {469-482},
   title = {Mediation and spillover effects in group-randomized trials: A case study of the 4Rs educational intervention},
   volume = {108},
   year = {2013},
}

@article{VanderWeele2009,
   author = {Tyler J. VanderWeele},
   doi = {10.1177/0049124110366236},
   issn = {00491241},
   issue = {4},
   journal = {Sociological Methods and Research},
   pages = {515-544},
   title = {Direct and indirect effects for neighborhood-based clustered and longitudinal data},
   volume = {38},
   year = {2009},
}

@article{VanderWeele2010_bias,
  author = {VanderWeele, Tyler J.},
  title = {Bias formulas for sensitivity analysis for direct and indirect effects},
  journal = {Epidemiology},
  volume = {21},
  number = {4},
  pages = {540--551},
  year = {2010}
}

@article{Lindmark2018,
  author = {Lindmark, Anita and de Luna, Xavier and Eriksson, Marie},
  title = {Sensitivity analysis for unobserved confounding of direct and indirect effects using uncertainty intervals},
  journal = {Statistics in Medicine},
  volume = {37},
  number = {10},
  pages = {1744--1762},
  year = {2018}
}
\end{singlespace}

\newpage
\appendix

\section{Proof of Theorem~\ref{thm:omega_partial_id}}
\label{app:omega_proof}
\begin{proof}
The proof proceeds in three steps. Step~1 reduces $\omega$ to a cluster-conditional integral against the cross-world joint of $\{M_{ij}(1),G(\underline M_{i(-j)}(0))\}$. Step~2 identifies the inner conditional expectation as a treated-arm outcome regression. Step~3 uses the Gaussian copula to identify the conditional law of the bivariate control peer summary given $M_{ij}(1)$ up to $\boldsymbol\rho_{1G}$. Throughout, the outer expectation follows the equal-cluster-weighted convention of Section~\ref{sec:bridge_score_sa}.

\medskip
\noindent\textit{Step 1: cluster-conditional decomposition.}
By the law of iterated expectation and the equal-cluster-weighted convention,
\begin{align*}
\omega
=
\E\bigl\{Y(1,S(1,0))\bigr\}
=
\E_{X_i,N_i}\!\left[\frac{1}{N_i}\sum_{j=1}^{N_i}\E\!\left\{Y_{ij}\bigl(1,M_{ij}(1),G(\underline M_{i(-j)}(0))\bigr)\bigm| X_i,N_i\right\}\right].
\end{align*}
By Assumption~\ref{ass:treatment_randomization}, the cross-world potential outcomes and potential mediators are jointly independent of $A_i$ given $(X_i,N_i)$, so
\begin{align*}
&\E\!\left\{Y_{ij}\bigl(1,M_{ij}(1),G(\underline M_{i(-j)}(0))\bigr)\bigm| X_i,N_i\right\}\\
&\qquad =
\E\!\left\{Y_{ij}\bigl(1,M_{ij}(1),G(\underline M_{i(-j)}(0))\bigr)\bigm|A_i=1, X_i,N_i\right\}.
\end{align*}
Write $G_{ij}(0)=G\{\underline M_{i(-j)}(0)\}$. Applying the law of iterated expectation conditional on $(A_i=1,M_{ij}(1)=m_j,G_{ij}(0)=\boldsymbol g)$,
\begin{align}
&\E\!\left\{Y_{ij}\bigl(1,M_{ij}(1),G(\underline M_{i(-j)}(0))\bigr)\bigm|A_i=1, X_i,N_i\right\}\nonumber\\
&\quad=
\int\!\!\int
\E\!\left\{Y_{ij}(1,m_j,\boldsymbol g)\bigm|A_i=1,M_{ij}(1)=m_j,G_{ij}(0)=\boldsymbol g,X_i,N_i\right\}\nonumber\\
&\quad\quad
\times
f_{M_{ij}(1)\mid A_i=1,X_i,N_i}(m_j)\,
q_{G_{ij}(0)\mid A_i=1,M_{ij}(1)=m_j,X_i,N_i}(\boldsymbol g)
\,d\boldsymbol g\,dm_j.
\label{eq:omega_proof_step1}
\end{align}

\medskip
\noindent\textit{Step 2: identification of the inner conditional expectation.}
By Assumption~\ref{ass:si_mediators}, $Y_{ij}(1,m_j,\boldsymbol g)$ is independent of the potential mediator vectors given $(A_i=1,X_i,N_i)$. Therefore, conditioning on $(G_{ij}(0)=\boldsymbol g,M_{ij}(1)=m_j)$ yields the same response surface as conditioning on the corresponding treated-arm exposure:
\begin{align*}
&\E\!\left\{Y_{ij}(1,m_j,\boldsymbol g)\bigm|A_i=1,M_{ij}(1)=m_j,G_{ij}(0)=\boldsymbol g,X_i,N_i\right\}\\
&\quad=
\E\!\left\{Y_{ij}(1,m_j,\boldsymbol g)\bigm|A_i=1,M_{ij}(1)=m_j,G_{ij}(1)=\boldsymbol g,X_i,N_i\right\}.
\end{align*}
By consistency, the latter expectation equals its observed treated-arm counterpart:
\begin{align*}
&\E\!\left\{Y_{ij}(1,m_j,\boldsymbol g)\bigm|A_i=1,M_{ij}(1)=m_j,G_{ij}(1)=\boldsymbol g,X_i,N_i\right\}\\
&\quad=
\E\!\left\{Y_{ij}\bigm|A_i=1,M_{ij}=m_j,G(\underline M_{i(-j)})=\boldsymbol g,X_i,N_i\right\}.
\end{align*}
Consistency and randomization also identify the marginal law of $M_{ij}(1)$ as $f_{M_{ij}\mid A_i=1,X_i,N_i}$.

\medskip
\noindent\textit{Step 3: identification of the cross-world conditional law via the Gaussian copula.}
By Assumption~\ref{ass:treatment_randomization}, the potential mediators are independent of randomized assignment conditional on $(X_i,N_i)$, so
\[
q_{G_{ij}(0)\mid A_i=1,M_{ij}(1)=m_j,X_i,N_i}
=q_{G_{ij}(0)\mid M_{ij}(1)=m_j,X_i,N_i}.
\]
Under Assumption~\ref{ass:gaussian_copula}, define
\[
z_j=\Phi_1^{-1}\{H_{1j}(m_j\mid X_i,N_i)\}.
\]
Equation~\eqref{eq:summary_copula_conditional} gives
\[
\begin{pmatrix}Z_{\mu j}\\Z_{\sigma j}\end{pmatrix}
\biggm|Z_{1j}=z_j,X_i,N_i
\sim
\mathrm N_2\!\left\{\boldsymbol\rho_{1G}z_j,
R_G-\boldsymbol\rho_{1G}\boldsymbol\rho_{1G}^{\top}\right\}.
\]
Pushing its two coordinates through
$Q_{0r,j}^{-1}\{\Phi_1(\cdot)\mid X_i,N_i\}$, $r\in\{\mu,\sigma\}$, determines the joint law
$q_{G_{ij}(0)\mid M_{ij}(1)=m_j,X_i,N_i;\boldsymbol\rho_{1G}}$.
Thus the conditional peer-summary law is identified once $\boldsymbol\rho_{1G}$ is specified.

\medskip
\noindent\textit{Combining the steps.}
Substituting the identification results of Steps~2 and 3 into \eqref{eq:omega_proof_step1} yields, for each fixed $(X_i,N_i)$ and unit $j$,
\begin{align*}
&\E\!\left\{Y_{ij}\bigl(1,M_{ij}(1),G(\underline M_{i(-j)}(0))\bigr)\bigm| X_i,N_i\right\}\\
&\qquad =
\int\!\!\int
\E\bigl\{Y_{ij}\bigm|A_i=1, M_{ij}=m_j,G(\underline M_{i(-j)})=\boldsymbol g,X_i,N_i\bigr\}\\
&\qquad\quad
\times
f_{M_{ij}\mid A_i=1,X_i,N_i}(m_j)\,
q_{G_{ij}(0)\mid M_{ij}(1)=m_j,X_i,N_i;\,\boldsymbol\rho_{1G}}(\boldsymbol g)
\,d\boldsymbol g\,dm_j.
\end{align*}
Averaging over units within cluster and taking the outer expectation over $(X_i,N_i)$ yields the partial identification formula \eqref{eq:omega_partial_id} of Theorem~\ref{thm:omega_partial_id}.

Finally, $\delta_1=\E\{Y(1,S(1,1))\}$ is identified from observed-arm outcomes under Assumption~\ref{ass:treatment_randomization} and Assumption~\ref{ass:bridge_consistency}, and $\theta=\E\{Y(1,S(0,0))\}$ is evaluated through the bridge score representation in \eqref{eq:bridge_theta_identified}. Therefore, given $\boldsymbol\rho_{1G}$ and the specified bridge score sensitivity correction, the spillover and individual mediation effects $\mathrm{SME}=\delta_1-\omega$ and $\mathrm{IME}=\omega-\theta$ are identified as direct consequences of the identification of $\omega$.
\end{proof}

\section{Sensitivity analysis for cross-world dependence}
\label{sec:sensitivity_analysis_details}

\subsection{Derivation of the positive-definiteness condition}

We derive \eqref{eq:summary_copula_pd} using a Schur complement. For compactness, write
$\boldsymbol r=\boldsymbol\rho_{1G}$ and $R=R_G$. Partition the correlation matrix in \eqref{eq:gaussian_copula_M} as
\begin{equation}
\boldsymbol\Omega_{ij}
=
\begin{pmatrix}
1&\boldsymbol r^\top\\
\boldsymbol r&R
\end{pmatrix}.
\label{eq:summary_copula_block_partition}
\end{equation}
The observed peer-summary correlation matrix $R$ is positive definite. Taking its Schur complement in \eqref{eq:summary_copula_block_partition} gives
\begin{equation}
\boldsymbol\Omega_{ij}\succ0
\quad\Longleftrightarrow\quad
1-\boldsymbol r^\top R^{-1}\boldsymbol r>0,
\label{eq:summary_copula_schur}
\end{equation}
which is exactly \eqref{eq:summary_copula_pd}. To display the ellipse explicitly, write
\[
R=
\begin{pmatrix}1&\kappa_{\mu\sigma}\\
\kappa_{\mu\sigma}&1\end{pmatrix},
\qquad
\boldsymbol r=
\begin{pmatrix}\rho_{1\mu}\\\rho_{1\sigma}\end{pmatrix}.
\]
Because $R\succ0$, $|\kappa_{\mu\sigma}|<1$ and
\[
R^{-1}=\frac{1}{1-\kappa_{\mu\sigma}^2}
\begin{pmatrix}1&-\kappa_{\mu\sigma}\\
-\kappa_{\mu\sigma}&1\end{pmatrix}.
\]
Therefore the admissible region can equivalently be written as
\begin{equation}
\frac{\rho_{1\mu}^2-2\kappa_{\mu\sigma}\rho_{1\mu}\rho_{1\sigma}
+\rho_{1\sigma}^2}{1-\kappa_{\mu\sigma}^2}<1.
\label{eq:summary_copula_ellipse_explicit}
\end{equation}
The numerator in \eqref{eq:summary_copula_ellipse_explicit} accounts for the observed association between the peer mean and peer SD, so the two sensitivity correlations cannot generally be chosen independently over the whole square $(-1,1)^2$. Finally, applying the Schur-complement theorem with the upper-left block instead gives
$\boldsymbol\Omega_{ij}\succ0\Longleftrightarrow R-\boldsymbol r\boldsymbol r^\top\succ0$.
This is the conditional covariance in \eqref{eq:summary_copula_conditional}, confirming that the same ellipse is both necessary and sufficient for the conditional simulation used to evaluate $\omega$.

\subsection{Posterior implementation}

The implementation uses the trivariate Gaussian copula construction in \eqref{eq:gaussian_copula_M}, rather than a copula for the potential mediator vectors. For each retained CA-EDP draw, posterior predictive laws under control determine the peer-mean and peer-SD marginal transforms and hence latent scores $(Z_{\mu j},Z_{\sigma j})$. Their equal-cluster-weighted correlation among observed control clusters gives $R_G^{(t)}$; uncertainty in their observed same-world relationship is propagated through the posterior mediator law and posterior updating of incomplete covariates. The peer mean is represented by a Normal approximation using posterior mixture moments. Because the sample SD is positive, its marginal law is represented by a moment-matched lognormal approximation.

For a fixed-grid value or prior draw of $\boldsymbol\rho_{1G}$, the exact bivariate conditional law \eqref{eq:summary_copula_conditional} simulates the control peer mean and SD given the individual's treated mediator. A proposed prior draw is retained only if
$\boldsymbol\rho_{1G}^{\top}R_G^{-1}\boldsymbol\rho_{1G}<1$. The reported one-at-a-time grids vary $\rho_{1\mu}$ or $\rho_{1\sigma}$ over $\{-0.75,-0.50,\ldots,0.75\}$ while setting the other to zero. The joint analysis uses the admissible part of the uniform box $[-0.75,0.75]^2$.

\section{Posterior inference}
\label{sec:posterior_inference}
For completeness, we describe the Bayesian posterior algorithm used for the PPACT sensitivity analysis. The observed data model is learned once from the randomized trial, and the unidentified bridge score correction is supplied by the calibrated sensitivity module described in Section~\ref{sec:bridge_score_sensitivity_crt}.

By \eqref{eq:theta_scalar}, posterior inference for $\theta$ requires posterior draws of the observed data components $f_0$, $f_1$, and $\mu_1$, together with a draw of the corrected-bias term $\bar{\Delta}=\bar{\Delta}_0-\bar{\Delta}_1$. We use a modular construction: the CA-EDP learns the observed data law, while a triangular prior supported on the observed-data-calibrated interval $[-\bar\Xi_\Delta,\bar\Xi_\Delta]$ represents departures from mediator ignorability. A second module varies the unidentified cross-world correlation vector $\boldsymbol\rho_{1G}$ used for the spillover/individual split. Joint sensitivity draws combine both modules within the same retained CA-EDP state and posterior-predictive covariate population.

\subsection{Posterior computation and sampling of the causal estimand}

Let $\Theta_{\mathrm{obs}}$ collect the parameters of the CA-EDP in \eqref{eq:CAEDPmodel}. Its current common response atoms, cluster-size kernel, and enriched covariate weights for $C_{ij}$ induce $f_a^{(t)}(s\mid c)$ and therefore $B_c^{(t)}(s)=\{f_0^{(t)}(s\mid c),f_1^{(t)}(s\mid c)\}$. The bridge features in the outcome design are rebuilt after each mediator-law update.

Write $C_{ij}=(C_{ij,o}^{\top},C_{ij,m}^{\top})^{\top}$ for its observed and missing coordinates. Conditional on covariate atom $q$, standard Gaussian conditioning gives
\[
C_{ij,m}\mid C_{ij,o},q
\sim \mathrm N(\boldsymbol\mu_{m\mid o,q},\boldsymbol\Sigma_{m\mid o,q}).
\]
Within current response atom $l$, write the mediator regression as
$M_{ij}=\alpha_{ij,l}+\boldsymbol\beta_{m,l}^{\top}C_{ij,m}+\varepsilon_{ij,l}$,
where $\alpha_{ij,l}$ contains the intercept, $A_i$, $N_i$, and observed-covariate contribution, and $\varepsilon_{ij,l}\sim\mathrm N(0,\sigma^2_{M,l})$. Combining the covariate atom with this mediator likelihood yields
\begin{align*}
\boldsymbol P_{ij,lq}
&=\boldsymbol\Sigma_{m\mid o,q}^{-1}
+\frac{\boldsymbol\beta_{m,l}\boldsymbol\beta_{m,l}^{\top}}{\sigma^2_{M,l}},\\
C_{ij,m}\mid C_{ij,o},M_{ij},l,q,\cdots
&\sim\mathrm N\!\left[
\boldsymbol P_{ij,lq}^{-1}\left\{
\boldsymbol\Sigma_{m\mid o,q}^{-1}\boldsymbol\mu_{m\mid o,q}
+\frac{\boldsymbol\beta_{m,l}(M_{ij}-\alpha_{ij,l})}{\sigma^2_{M,l}}
\right\},
\boldsymbol P_{ij,lq}^{-1}
\right].
\end{align*}
These draws are made before updating the regression atoms. Thus data augmentation, mediator-law estimation, and outcome regression occur within one posterior chain. Randomization identifies the observed-world means $\delta_a$ without sensitivity parameters.

Because neither sensitivity component is identified from the observed data likelihood, for a fixed triangular specification $(k_a,g_a,q)$ and copula specification $\pi(\boldsymbol\rho_{1G}\mid R_G)$ we use the modular factorization
\[
p(\Theta_{\mathrm{obs}},\bar{\Delta},\boldsymbol\rho_{1G}\mid \mathcal D)
=
p(\Theta_{\mathrm{obs}}\mid \mathcal D)\,
\pi(\bar{\Delta}\mid\Theta_{\mathrm{obs}},k_a,g_a,q)\,
\pi(\boldsymbol\rho_{1G}\mid R_G,\Theta_{\mathrm{obs}}),
\]
Here $\mathcal D$ denotes the observed data. Dependence on $\Theta_{\mathrm{obs}}$ appears only through the observed-data calibration, identified marginals, and $R_G$; the likelihood does not update the analyst-chosen sensitivity directions. The following nested Monte Carlo is a post-processing layer for the retained CA-EDP states: it computes causal estimands and sensitivity draws but does not feed back into, refit, or otherwise change the observed-data posterior. 

\paragraph{Step 1: draw the observed data law.}
For each MCMC iteration $t=1,\dots,T$, draw $\Theta_{\mathrm{obs}}^{(t)} = \Bigl(f_0^{(t)},f_1^{(t)},\mu_1^{(t)},\delta_0^{(t)},\delta_1^{(t)}\Bigr)$
from the CA-EDP posterior. At each retained iteration, draw a predictive standardization population of $I$ clusters from the current covariate law. In the PPACT specialization,
\begin{align*}
\widetilde K_i^{(t)}&\sim\operatorname{Categorical}(\boldsymbol\pi^{(t)}),
&\widetilde N_i^{(t)}-3&\sim\operatorname{Poisson}(\lambda_{\widetilde K_i^{(t)}}^{(t)}),\\
\widetilde L_{ij}^{(t)}&\sim\operatorname{Categorical}
(\boldsymbol w^{\theta,(t)}_{\widetilde K_i^{(t)}}),
&\widetilde Q_{ij}^{(t)}&\sim\operatorname{Categorical}
(\boldsymbol w^{\phi,(t)}_{\widetilde K_i^{(t)},\widetilde L_{ij}^{(t)}}),\\
\widetilde C_{ij}^{(t)}&\sim\mathrm N\!\left(
\boldsymbol\mu_{\widetilde Q_{ij}^{(t)}}^{(t)},
\boldsymbol\Sigma_{\widetilde Q_{ij}^{(t)}}^{(t)}\right).
\end{align*}
This draws from the cluster-size and individual-covariate distribution already present in the observed-data model; no additional Bayesian-bootstrap distribution is introduced.

\paragraph{Step 2: construct the bridge score and evaluate the anchoring estimand.}
For each draw $t$, draw $M$ Monte Carlo (MC) draws of mediators $\widetilde S^{(t,m)}_{ij}$ from $f_0^{(t)}(s\mid \widetilde C_{ij}^{(t)})$ for $m=1,\dots,M$, and compute the bridge score $\widetilde B_{ij}^{(t,m)}=B_{\widetilde C_{ij}^{(t)}}^{(t)}(\widetilde S^{(t,m)}_{ij})$.
Define
\begin{equation}
\theta_{\mathrm{SI}}^{(t,m)} = \frac{1}{I}\sum_{i=1}^I\frac{1}{\widetilde N_i^{(t)}}
\sum_{j=1}^{\widetilde N_i^{(t)}}
\mu_1^{(t)}\bigl(\widetilde S_{ij}^{(t,m)},\widetilde B_{ij}^{(t,m)}\bigr).
\label{eq:theta_SI_draw_new}
\end{equation}
The $M$ MC samples are drawn for g-computation to approximate the integral of $\mu_1$ in \eqref{eq:theta_SI_def}.
The average in \eqref{eq:theta_SI_draw_new} is the equal-cluster-weighted superpopulation expectation fixed in Section~\ref{sec:bridge_score_sa}, evaluated over the posterior-predictive covariate population. The same predictive clusters are used for $\delta_a^{(t)}$, $\omega^{(t)}$, and the counterfactual residual-scale average.

\paragraph{Step 3: calibrate the pointwise residual outcome scale and scalar bounds.}
For each draw $t$, construct the realized log-bridge features for the observed treated units,
\[
\widetilde B_{ij}^{(t)}
=
\bigl\{\log f_0^{(t)}(S_{ij}\mid X_{ij}),\,
\log f_1^{(t)}(S_{ij}\mid X_{ij})\bigr\},
\qquad A_i=1.
\]
Stabilizing through logs avoids numerical underflow in the auxiliary calibration regression. Let $\mathcal I_1=\{i:A_i=1\}$ and use treated-arm equal-cluster weights $w_{ij}=|\mathcal I_1|^{-1}N_i^{-1}$ for $i\in\mathcal I_1$. Fit an auxiliary calibration regression of $Y_{ij}$ on an intercept, the mediator-exposure components $S_{ij}=(M_{ij},\overline M_{i,-j},D_{i,-j})$, and the log-bridge features $\widetilde B_{ij}^{(t)}$ among treated units, and let $\widehat m_{1,SB}^{(t)}\bigl(S_{ij},\widetilde B_{ij}^{(t)}\bigr)$ denote its fitted value. Including $S_{ij}$ in the regression subtracts the within-bridge score-stratum $S$-driven variation in $Y$, leaving residual variation used to calibrate the local latent mean-variance component $v_a(s,b)$. Define
\[
\overline Y_1
=
\sum_{i\in\mathcal I_1}\sum_{j=1}^{N_i}w_{ij}\,Y_{ij},
\qquad
\widehat V_1
=
\sum_{i\in\mathcal I_1}\sum_{j=1}^{N_i}
w_{ij}\bigl(Y_{ij}-\overline Y_1\bigr)^2,
\]
and
\[
\widehat R_{1,SB}^{2,(t)}
=
1-
\frac{
\sum_{i\in\mathcal I_1}\sum_{j=1}^{N_i}
w_{ij}\bigl\{Y_{ij}-\widehat m_{1,SB}^{(t)}\bigl(S_{ij},\widetilde B_{ij}^{(t)}\bigr)\bigr\}^2
}{
\sum_{i\in\mathcal I_1}\sum_{j=1}^{N_i}
w_{ij}\bigl(Y_{ij}-\overline Y_1\bigr)^2
}.
\]
The residual outcome scale is
\begin{equation}
\widehat\sigma_\eta^{(t)}
=
\bigl[\widehat V_1\{1-\widehat R_{1,SB}^{2,(t)}\}\bigr]^{1/2}.
\label{eq:sigma_eta_draw}
\end{equation}
This global scale is useful as a diagnostic. For the residual-budget envelope itself, fit a low-dimensional working model for the local residual variance using the squared residuals
\[
\widehat e_{ij}^{(t)}
=
Y_{ij}-\widehat m_{1,SB}^{(t)}\bigl(S_{ij},\widetilde B_{ij}^{(t)}\bigr),
\qquad A_i=1,
\]
for example a log-linear regression of $\log\{(\widehat e_{ij}^{(t)})^2+c^{(t)}\}$ on $(1,M_{ij},\overline M_{i,-j},\widetilde B_{ij}^{(t)})$ with a small positive stabilizing constant $c^{(t)}$. Let $\widehat\sigma_{\mathrm{res}}^{(t)}(s,b)$ denote the resulting predicted residual standard deviation. For each counterfactual draw $\widetilde S_{ij}^{(t,m)}$ from Step~2, define the pointwise envelope
\begin{equation}
\Xi_a^{(t,m)}(\widetilde S_{ij}^{(t,m)};k_a,g_a)
=
\widehat\sigma_{\mathrm{res}}^{(t)}\!\left(\widetilde S_{ij}^{(t,m)},\widetilde B_{ij}^{(t,m)}\right)
\sqrt{k_a(g_a-1)},
\qquad
a\in\{0,1\}.
\label{eq:xi_point_draw}
\end{equation}
If the outcome support $[L_Y,U_Y]$ is used, replace \eqref{eq:xi_point_draw} by the pointwise minimum of this residual-budget envelope and $(U_Y-L_Y)(g_a-1)/g_a$. The scalar envelope is then the same equal-cluster-weighted Monte Carlo average used for $\theta_{\mathrm{SI}}$:
\begin{equation}
\bar{\Xi}_a^{(t)}
=
\frac{1}{M}\sum_{m=1}^M
\frac{1}{I}\sum_{i=1}^I\frac{1}{\widetilde N_i^{(t)}}
\sum_{j=1}^{\widetilde N_i^{(t)}}
\Xi_a^{(t,m)}(\widetilde S_{ij}^{(t,m)};k_a,g_a).
\label{eq:xi_bar_draw}
\end{equation}
By construction, this is the posterior-draw version of $\bar{\Xi}_a$, so the per-arm calibrated admissible interval for the aggregated sensitivity functional is $[-\bar{\Xi}_a^{(t)},\bar{\Xi}_a^{(t)}]$, and the corrected-bias term has the worst-case envelope
\begin{equation}
\bar{\Xi}_\Delta^{(t)} = \bar{\Xi}_0^{(t)} + \bar{\Xi}_1^{(t)},
\qquad
|\bar{\Delta}^{(t)}|\le \bar{\Xi}_\Delta^{(t)}.
\label{eq:xi_delta_draw}
\end{equation}

\paragraph{Step 4: draw the bias correction.}
Draw $Z^{(t)}\sim\mathrm{Triangular}(0,1,q)$ for fixed $q$ and set
$\bar{\Delta}^{(t)}=-\bar{\Xi}_\Delta^{(t)}+2\bar{\Xi}_\Delta^{(t)}Z^{(t)}$.
The draw is supported on the calibrated interval, has mode
$\bar\Xi_\Delta^{(t)}(2q-1)$, and has the conditional moments in \eqref{eq:delta_mean_var}.

\paragraph{Step 5: compute the cross-world treatment mean.}
By the scalar bridge score representation in \eqref{eq:theta_scalar}, its posterior draw is
\begin{equation}
\theta^{(t)}
=
\theta_{\mathrm{SI}}^{(t)}
+\bar{\Delta}^{(t)},
\label{eq:theta_draw_new}
\end{equation}
where $\theta_{\mathrm{SI}}^{(t)} = M^{-1}\sum_{m=1}^M \theta_{\mathrm{SI}}^{(t,m)}$ is the Monte Carlo sequential ignorability anchor from Step~2.

\paragraph{Step 6: compute causal effects.}
For each $t$, compute
\[
\mathrm{NDE}^{(t)}
=
\theta^{(t)}-\delta_0^{(t)},
\qquad
\mathrm{NIE}^{(t)}
=
\delta_1^{(t)}-\theta^{(t)},
\qquad
\mathrm{TE}^{(t)}
=
\delta_1^{(t)}-\delta_0^{(t)}.
\]
For the copula component, calculate the control-arm same-world score correlation matrix $R_G^{(t)}$ from the marginally transformed observed control peer mean and peer SD using equal-cluster weights. For each fixed or prior specification $r$, draw $\boldsymbol\rho_{1G,r}^{(t)}$ within the positive-definite ellipse. In nested replicate $m$, transform the simulated treated own mediator in the predictive population to
$\widetilde z_{1ij}^{(t,m)}=\Phi_1^{-1}\{H_{1j}^{(t)}(\widetilde M_{ij}^{1,(t,m)})\}$ and draw
\[
\widetilde{\boldsymbol Z}_{Gij}^{0,(t,m)}
\sim
\mathrm N_2\!\left\{
\boldsymbol\rho_{1G,r}^{(t)}\widetilde z_{1ij}^{(t,m)},
R_G^{(t)}-\boldsymbol\rho_{1G,r}^{(t)}
\boldsymbol\rho_{1G,r}^{(t)\top}
\right\}.
\]
The current CA-EDP predictive marginal laws map this draw to the control peer mean and peer SD, $\widetilde{\boldsymbol G}_{ij}^{0,(t,m)}$. Recompute the bridge features at the mixed exposure
$\widetilde S_{ij}^{10,(t,m)}=\{\widetilde M_{ij}^{1,(t,m)},\widetilde{\boldsymbol G}_{ij}^{0,(t,m)}\}$ and set
\begin{equation}
\omega_r^{(t)}
=
\frac{1}{M}\sum_{m=1}^M
\frac{1}{I}\sum_{i=1}^I\frac{1}{\widetilde N_i^{(t)}}
\sum_{j=1}^{\widetilde N_i^{(t)}}
\mu_1^{(t)}\!\left(\widetilde S_{ij}^{10,(t,m)},\widetilde B_{ij}^{10,(t,m)}\right).
\label{eq:omega_draw_new}
\end{equation}
At the sequential-ignorability anchor,
\[
\mathrm{SME}_r^{(t)}=\delta_1^{(t)}-\omega_r^{(t)},
\qquad
\mathrm{IME}_r^{(t)}=\omega_r^{(t)}-\theta_{\mathrm{SI}}^{(t)}.
\]
For joint sensitivity analysis, retain the same $\omega_r^{(t)}$ but replace $\theta_{\mathrm{SI}}^{(t)}$ by $\theta^{(t)}=\theta_{\mathrm{SI}}^{(t)}+\bar\Delta^{(t)}$ in IME. Hence $\mathrm{SME}^{(t)}+\mathrm{IME}^{(t)}=\mathrm{NIE}^{(t)}$ draw by draw under both the anchor and joint analyses. Posterior means, credible intervals, and posterior probabilities are empirical summaries of the retained draws.

\end{document}